\documentclass[onecolumn, 12pt, draftclsnofoot]{IEEEtran}
\usepackage{amsmath,amssymb, mathtools,amsthm,gensymb }
\usepackage{relsize}

\usepackage{booktabs}
\usepackage{enumerate}
\usepackage{graphicx}
\usepackage{epsfig}
\usepackage{float}
\usepackage[export]{adjustbox} 

\DeclareMathOperator*{\argmax}{\arg\!\max}
\usepackage{blindtext}
\usepackage{algorithm,algcompatible}
\usepackage{varwidth}
\algnewcommand\INPUT{\item[\textbf{Input:}]}%
\algnewcommand\OUTPUT{\item[\textbf{Output:}]}%
\theoremstyle{definition}

\theoremstyle{definition}
\newtheorem{theorem}{Theorem}
\theoremstyle{remark}
\newtheorem*{remark}{\textbf{Remark}}
\newtheorem{lemma}{\textbf{Lemma}}
\newtheorem{proposition}{Proposition}
\usepackage{layouts}
\usepackage{color}
\usepackage{xcolor}
\usepackage{cite}
\usepackage{enumitem}
\usepackage{printlen}

\usepackage{setspace}
\usepackage{xcolor}
\usepackage{float}
\usepackage{caption}
\usepackage{subcaption}
\usepackage{multirow}
\allowdisplaybreaks
\algdef{SE}[SUBALG]{Indent}{EndIndent}{}{\algorithmicend\ }%
\algtext*{Indent}
\algtext*{EndIndent}

\usepackage{tikz}
\usetikzlibrary{positioning}
\usetikzlibrary{shapes.geometric, arrows}
\usetikzlibrary{chains}

\tikzstyle{startstop} = [rectangle, rounded corners, minimum width=1.1cm,  minimum height=1.1cm,text centered, draw=black, fill=white!30, rounded corners=15pt, text width=4cm]
\tikzstyle{process} = [rectangle, minimum width=1cm, minimum height=1cm, text centered, draw=black, fill=white!30, rounded corners, text width=3cm]
\tikzstyle{Test} = [diamond, minimum width=0.00001cm, minimum height=0.00001cm, text centered, draw=black, fill=white!30, text width=2cm]
\tikzstyle{arrow} = [thick,->,>=stealth]

\newcommand{\norm}[1]{\left\lVert #1 \right\rVert}
\newcommand{\revised}[1]{\textcolor{black}{#1}}
\begin{document}
	\title{Resource Allocation in NOMA-based Self-Organizing Networks using Stochastic Multi-Armed Bandits}
	\author{Marie-Josepha~Youssef,~\IEEEmembership{Student~Member,~IEEE,}
	Venugopal V. Veeravalli,~\IEEEmembership{Fellow,~IEEE,}  Joumana~Farah,~\IEEEmembership{Member,~IEEE,}
	Charbel~Abdel~Nour,~\IEEEmembership{Senior Member,~IEEE,} and ~Catherine~Douillard,~\IEEEmembership{Senior~Member,~IEEE}\\\vspace*{-1cm}
	\thanks{M. J. Youssef, C. Abdel Nour and C. Douillard are with IMT Atlantique, LabSTICC, UBL, F-29238 Brest, France, (e-mail: marie-josepha.youssef@imt-atlantique.fr; charbel.abdelnour@imt-atlantique.fr; catherine.douillard@imt-atlantique.fr).} 
	\thanks{ V. V. Veeravalli is with the ECE Department, University of Illinois at Urbana–Champaign, Urbana, IL 61801 USA, and also with the Coordinated Science Laboratory, University of Illinois at Urbana–Champaign, Urbana, IL 61801 USA (e-mail: vvv@illinois.edu).}
	\thanks{ J. Farah is with the Department of Electricity and Electronics,	Faculty of Engineering, Lebanese University, Roumieh, Lebanon (e-mail: joumanafarah@ul.edu.lb).}
	\thanks{This work has been funded with support from the UBL, the GdR ISIS, the Lebanese University, and the US National Science Foundation SpecEES program under grant number 1730882, throughout the University of Illinois at Urbana-Champaign (UIUC). }}

	\maketitle
	\vspace{-0.7cm}
		\begin{abstract}
				\vspace{-0.3cm}
		\revised{To achieve high data rates and better connectivity in future communication networks, the deployment of different types of access points (APs) is underway. In order to limit human intervention and reduce costs, the APs are expected to be equipped with self-organizing capabilities. Moreover, due to the spectrum crunch, frequency reuse among the deployed APs is inevitable, exacerbating the problem of inter-cell interference (ICI).  Therefore, ICI mitigation in self-organizing networks (SONs) is commonly identified as a key radio resource management mechanism to enhance performance in future communication networks.  With the aim of reducing ICI in a SON, this paper proposes a novel solution for the uncoordinated channel and power allocation problems. Based on the multi-player multi-armed bandit (MAB) framework, the proposed technique does not require any communication or coordination between the APs.} The case of varying channel rewards across APs is considered. In contrast to previous work on channel allocation using the MAB framework, APs are permitted to choose multiple channels for transmission. Moreover, non-orthogonal multiple access (NOMA) is used to allow multiple APs  to access each channel simultaneously. This results in an MAB model with varying channel rewards, multiple plays and non-zero reward on collision. The proposed algorithm has an expected regret in the order of  $\mathcal{O}(\log^2T)$, which is validated by  simulation results. \revised{Extensive numerical results also reveal that the proposed technique significantly outperforms the well-known upper confidence bound (UCB) algorithm, by achieving more than a twofold increase in the energy efficiency.}
	\end{abstract}
	\vspace{-0.5cm}
	\begin{IEEEkeywords}
		\vspace{-0.3cm}
		Uncoordinated channel and power allocation, MAB with multiple plays and non-zero reward on collision, varying reward distribution, NOMA, self-organizing networks.
	\end{IEEEkeywords}

\section{Introduction}
Future cellular communication networks are expected to support a myriad of new applications and services conceived for both traditional human-type devices and for the growing number of machine-type devices \cite{7397856}. To meet the exponential growth in connectivity and mobile traffic, new technologies are needed. Among these new technologies, the deployment of different types of access points (AP), e.g., small base-stations (SBS), pico-cells, femto-cells, relays, etc., is of particular importance, since APs can offload mobile traffic from highly congested macro-base stations (MBS) \cite{6525592}. To limit human intervention and reduce planning and maintenance costs, APs can be equipped with self-organizing capabilities \cite{sonReference}, allowing them to optimize their resource use in a distributed manner. APs normally have a lower transmit power budget and a smaller coverage range when compared to traditional MBSs. However, thanks to their denser deployment,  APs benefit from the ability to consume less transmit power, leading to significant gains in power consumption as was shown in \cite{6215539,JA_VehTech}.  That said, by introducing APs into the network, the problem of inter-cell-interference (ICI) is aggravated, necessitating the application of adequate resource allocation algorithms to limit the interference \cite{5441362}.

The problem of ICI in self-organizing networks (SON) was extensively studied in the literature. In \cite{6945909}, the weighted sum-rate of the system is optimized through ICI coordination between SBSs. The authors adopt a blanking method where at the level of each SBS, some wireless channels are not used to mitigate the ICI. In \cite{7070660}, an algorithm for ICI coordination between SBSs based on asynchronous inter-cell signaling is proposed. The authors of \cite{7343514} propose an algorithm based on a semi-static frequency allocation to mitigate ICI and enhance the performance of cell-edge users. The proposed solutions of \cite{6945909, 7070660, 7343514 } rely on explicit communication between the distributed SBSs to mitigate ICI, resulting in excessive signaling among SBSs. To limit signaling overhead, decentralized algorithms, based on reinforcement learning, are preferred. 

\revised{The use of reinforcement learning in wireless communications has recently garnered significant attention \cite{7476897}.} The related framework of multi-player multi-armed bandits (MAB)  \cite{lattimore} has also been widely used to study  multiple problems in wireless communication systems ranging from SON \cite{6140047,6240019,7248837}, to uncoordinated spectrum access \cite{5738217,5535151,magesh, meghana}, to fast uplink grant allocation \cite{9075198}, to unmanned-aerial vehicles positioning and path-planning \cite{8669870}. In the context of SON, in \cite{6140047,6240019}, a solution is proposed based on the stochastic MAB framework to allow SBSs to partition efficiently the available frequency resources in an effort to mitigate ICI. In \cite{7000863}, a method based on learning authomata is proposed where femto-cells adjust their resource use based on the feedback received from users. In \cite{7248837}, the authors resort to the EXP3 algorithm from the adversarial MAB framework to mitigate the  ICI while allowing  each base-station (BS) to access multiple frequency bands. \revised{The work in \cite{8336932} proposes a data-driven approach based on the MAB framework to address the ICI problem in heterogeneous networks (HetNets).} The MAB framework was also widely used to study the opportunistic and the uncoordinated spectrum access problems.
 For example, in \cite{5738217}, \cite{5535151} and \cite{8676344},  the MAB model is used to study the opportunistic spectrum access problem in cognitive radio networks where secondary users compete to access the part of the spectrum not occupied by primary users. In addition to studying  the opportunistic channel access problem,  in \cite{8676344}, the authors also solve the distributed power allocation problem. 
 In contrast to opportunistic channel access,  the authors of \cite{magesh}, \cite{meghana} and \cite{8902878} employ MAB to study the uncoordinated spectrum access problem without distinguishing between the users. The distributed power control problem is studied in \cite{8902878},  and solutions are proposed based on the upper-confidence-bound (UCB) algorithm, and on the $\epsilon$-greedy algorithm. In \cite{6953296}, the channel and power allocation problem in a device-to-device  system is modeled using the MAB framework. A game-theoretic solution based on the potential game framework is proposed to minimize regret of users.

With the exception of \cite{7248837}, all previous work on wireless communications solutions based on MABs assumes that each player chooses one channel at each timeslot. However, removing this assumption is expected to improve performance for the players if a suitable algorithm is formulated, especially for the case of a SON. Indeed when an AP can access multiple channels simultaneously, an increase in both, the probability of a successful transmission and the achieved reward or rate is observed, allowing the AP to serve more end-users. Moreover, with the exception of \cite{magesh, 6953296, meghana}, all previous work based on MABs considered a zero reward for multiple players accessing the same channel.  By alleviating this assumption and adopting non-orthogonal multiple access (NOMA), system performance is expected to further improve.

From an information-theoretical point of view, it is well-known that non-orthogonal user multiplexing using superposition coding at the transmitter and proper decoding techniques at the receiver not only outperforms orthogonal multiplexing, but is also optimal in the sense of achieving the capacity region of the downlink broadcast channel \cite{tse}. As a result, NOMA emerged as a promising multiple access technology for 5G systems \cite{6666209,8869799,tvt_mjy}. NOMA allows multiple users to be scheduled on the same time-frequency resource by multiplexing them in the power domain. At the receiver side, successive interference cancellation (SIC) is performed  to retrieve superimposed signals.

{To limit the ICI in a SON, studying the resource allocation in the fronthaul portion of the network is of utmost importance \cite{6140047,6240019,7248837}. When coupled with optimizing the resource allocation in the backhaul link, optimizing the fronthaul portion leads to significant performance gains \cite{8891385, tvt_mjy}.}

  In this paper, we consider the fronthaul part of a self-organizing wireless network where multiple APs aim at organizing their uplink transmissions with a central unit in a distributed manner. Both the uncoordinated channel access and the distributed power control problems are studied. A solution based on the MAB framework, which does not necessitate any coordination or communication between APs, is proposed. The considered setting is closest to the ones studied in \cite{magesh} and \cite{got}, where a game-theoretic approach is used to solve the uncoordinated channel access problem. Our study extends that of \cite{magesh} and \cite{got} by allowing each AP to access multiple channels simultaneously and by proposing a model for the distributed power control problem. The main contributions of this paper can be summarized as follows:
\begin{itemize}
	\item A two-phase algorithm based on the MAB framework, extending the work in \cite{got, magesh}, is proposed for the uncoordinated channel access and distributed power control problems.
	\item For the first phase, i.e., the uncoordinated channel access phase, in addition to considering  varying channel rewards between APs, each AP is allowed to simultaneously access multiple channels. This is in contrast to the work in \cite{magesh}  and \cite{got}  where each player accesses one channel in a timeslot. Moreover, each channel can accommodate multiple APs at once using NOMA, leading to a multi-player MAB  problem with varying player rewards, multiple plays and non-zero reward on collision. 
	\item For the power control phase, varying power level rewards between APs are considered and an algorithm to solve the power control problem on each channel is proposed.
	\item The proposed technique is shown to achieve a sublinear regret of $\mathcal{O}(\log^2 T)$. In addition, simulation results validating the theoretical results and the performance of the proposed technique are presented.
	\item   To the best of our knowledge, this is the first work that studies the uncoordinated channel access and the distributed power control problems in a SON network, using both NOMA and the multi-player MAB framework with varying channel rewards across users, multiple plays, and non-zero reward on collision.
\end{itemize}


The rest of this paper is organized as follows. The system model is presented in section \ref{sec:systemModel}. In sections \ref{sec:solution}, \ref{sec:regret}, \ref{sec:expPhase} and \ref{sec:matPhase}, the proposed algorithm is presented along with an analysis of the system-wide regret. Simulation results are presented in section \ref{sec:results} and conclusions in section \ref{sec:conc}.

\section{System Model}\label{sec:systemModel}

Consider the  uplink of a cellular system as shown in Fig.\ \ref{fig:sysModel} where $K$ APs aim to organize their communications with an MBS serving as gateway to the core network, over  $M$ available wireless channels, in an uncoordinated manner. The communication occurs over a finite time horizon $T$ that may not be known in advance to the APs. At each timeslot $t$,  every AP $k$  chooses $N$ channels, adjusts its transmission power, and transmits over the chosen channels. \revised{Note that the proposed solution can be easily extended to the case where each AP $k \in \mathcal{K}$ chooses $N_k$ channels at each timeslot, where $1\leq N_k \leq M$.}
We assume that NOMA is employed, enabling multiple APs to choose the same channel for communication and  achieve a non-zero rate. 
\begin{figure}[!h]
	\begin{center}
		\vspace{-0.3cm}
		\includegraphics[height=5.5cm,keepaspectratio]{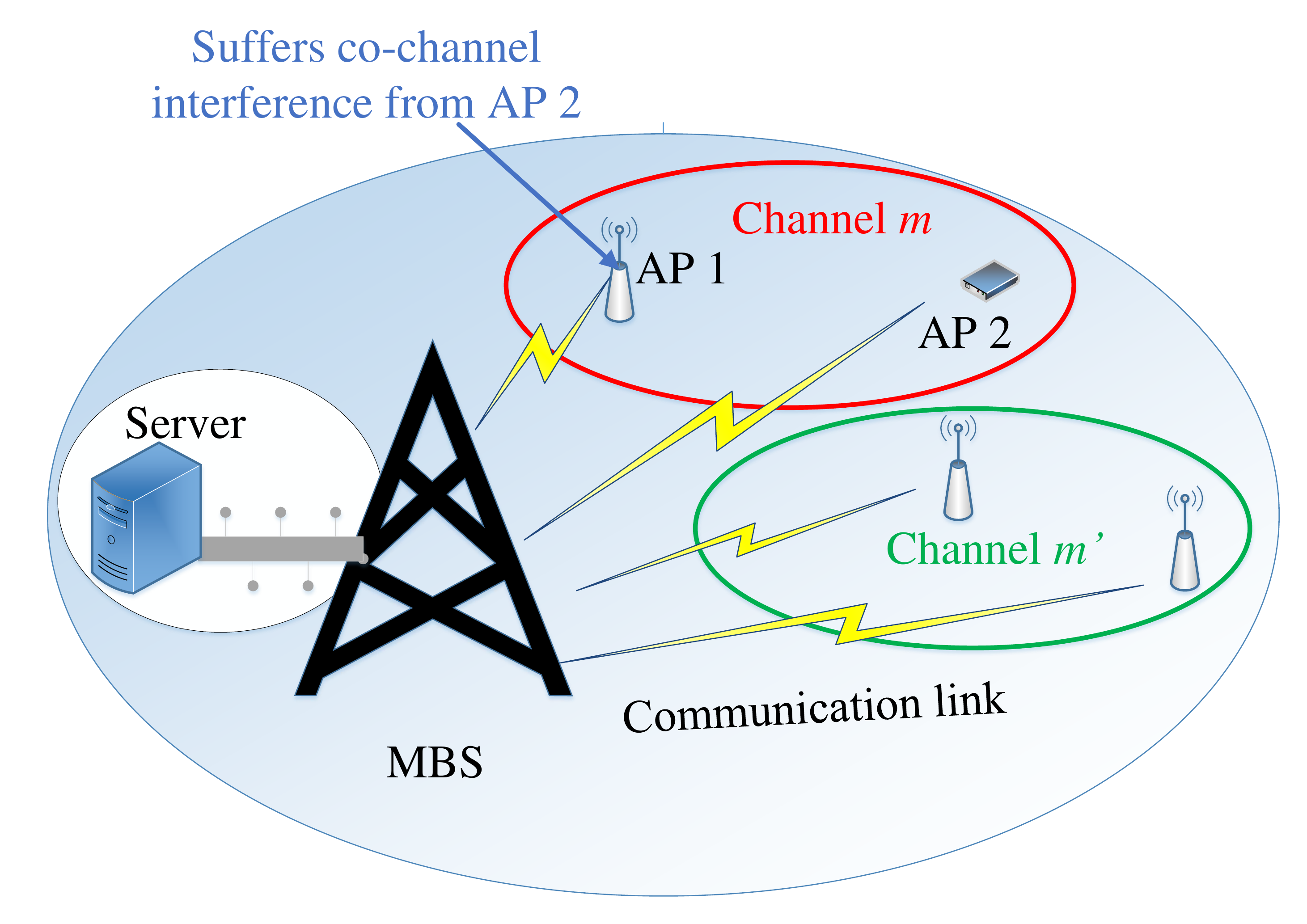}  
		\vspace{-0.2cm}
		\caption{\label{fig:sysModel} {System Model.}}
	\end{center}\vspace{-0.8cm}
\end{figure}
That said, if two or more APs choose the same channel, the received power levels of these APs must be  different at the receiving BS level in the core network, to enable SIC decoding at the receiver side. To ensure the reception of different received power levels for the signals transmitted by the APs, we generalize the uplink NOMA power allocation model introduced in \cite{8085106}, where for a constant SINR requirement, $L$ received power levels, ensuring the SINR requirement for $L$ users scheduled on the same channel, are calculated. In this work, we extend the study of \cite{8085106} to allow for $L$ distinct SINR requirements per channel, $\boldsymbol{\Gamma}=\{\Gamma_1,\ldots,\Gamma_L\}$, sorted by decreasing order. { Note that allowing for distinct SINR levels inherently encompasses the special case of constant SINR levels.}  An AP $k$ choosing SINR requirement $\Gamma_l$ over channel $m$ achieves the following uplink data rate: 
\begin{equation}
R_{k,m,l}=\log_2\left(1+\Gamma_l\right),  
\end{equation}
where $\Gamma_l$ is given by:
\begin{equation}\label{eq:gammaLevel}
\Gamma_l=\frac{v_l}{V_l+N_0B_c}.
\end{equation}
In Eq.\ (\ref{eq:gammaLevel}), $v_l$ is the received power level of AP $k$, the expression of which is given in Section \ref{subsec:powerModel}, $N_0$ is the noise power spectral density and $B_c$ the channel bandwidth.  At the receiver side, when the AP transmissions are received with different power levels, SIC is employed to decode the received messages in a descending order. In other words, the AP choosing the highest SINR requirement $\Gamma_1$, and consequently the highest received power level $v_1$, suffers  interference from all APs choosing a lower SINR requirement. Once decoded, the signal of  the AP choosing $\Gamma_1$ is removed using SIC before decoding the remaining messages.  Hence, variable $V_l$ of Eq.\ (\ref{eq:gammaLevel}) is the power level of the interfering transmissions, not canceled with SIC, expressed as: $V_l=\sum_{l'=l+1}^L v_{l'}$.
To limit the decoding complexity at the receiving BS in the core network, as well as the error propagation in SIC, the number of APs allowed to access a channel and achieve a non-zero rate is limited to $\beta$, such that $\beta M\geq K N$. \revised{Note that in the case of a varying number of chosen channels across users, this last condition becomes $\beta M\geq \sum_{k\in\mathcal{K}} N_k$.}  It is assumed that when an AP $k$ accesses a channel $m$, $k$ knows the total number of APs currently accessing channel $m$. No \emph{a priori} knowledge of  the channel gain experienced over each channel is assumed. Moreover, these channel gains are distinct for each AP. To solve the channel and power allocation problems in an uncoordinated manner, we proceed in two steps, the first, of length $T_C$, dedicated to  channel allocation and the second, of length $T_P$, dedicated to  power allocation. Note that both $T_C$ and $T_P$ may not be known to the APs.

\subsection{Uncoordinated channel allocation} \label{subsec:channelModel}
To allow each AP to access $N$ channels simultaneously in a NOMA manner, the problem of uncoordinated multiple access is modeled as a stochastic multi-player MAB problem with multiple plays and non-zero reward on collision. The set of players is the set of APs $\mathcal{K}$ and the set of arms is the set of channels $\mathcal{M}$. The action of each AP $k$ at each timeslot $t$ is $\boldsymbol{a}_k^t\in\{0,1\}^{M\times 1}$ such that $a_k^t (m)=1  $ if AP $k$ pulls channel $m$ at timeslot $t$. Moreover, $\sum_{m=1}^M a_k^t (m)=N,~\forall k\in\mathcal{K}.$ The action space of each AP $k$, $\mathcal{A}_k$, consists of all possible combinations of $N$ channels, hence $|\mathcal{A}_k|=\binom{M}{N}$. Let $\boldsymbol{a}^t=\{\boldsymbol{a}_1^t,\ldots,\boldsymbol{a}_K^t\}$ denote the strategy profile of  all APs in timeslot $t$. 
\revised{Upon choosing an action $\boldsymbol{a}_k^t \in \boldsymbol{a}^t$, AP $k$ receives the following  average reward:} 
\begin{equation}
g_k^t(\boldsymbol{a^t})=\sum_{m=1}^M a_k^t (m) \mu_M(k,m, k_m),
\end{equation}
where $k_m$ is the number of APs choosing channel $m$ at timeslot $t$. Variable $\mu_M(k,m, k_m)$ is the mean reward of AP $k$ over channel $m$  when $k_m$ APs access it. \revised{Note that the actual value of the received reward by AP $k$ when choosing channel $m$ at timeslot $t$ is drawn from a uniform distribution with mean $\mu_M(k,m, k_m)$.}

 We assume that the mean reward of AP $k$ when accessing channel $m$ alone is equal to the normalized \revised{average} channel gain of AP $k$  over channel $m$, i.e.,
\begin{equation}\label{eq:rewardNorm}
\mu_M(k,m,1)=h_{k,m}/\mu_M^{max},
\end{equation} 
where $h_{k,m}$ is the \revised{average} channel gain of AP $k$ over channel $m$ and $\mu_M^{max}=\max\limits_{k\in\mathcal{K}, m \in \mathcal{M}}h_{k,m}$. {Note that it is assumed that the  BS at the core network performs channel estimation on the received signals from all APs. Hence,  the average channel gains $h_{k,m}, \forall k \in \mathcal{K}, \forall m \in \mathcal{M}$ are assumed to be perfectly known by the receiving BS.}
For $1<k_m\leq \beta$, the mean reward of an AP must account for the added interference brought by the $(k_m-1)$ other APs scheduled on the same channel $m$. Ideally, the mean reward should take into account the interference brought by each particular AP. However, that would result in a prohibitive complexity since any channel, for each $1<k_m\leq \beta$, would have $\binom{K-1}{k_m}$ distinct reward values. To simplify the analysis, in this work, we assume that the mean reward for $1<k_m\leq \beta$, is a decreasing function of the number of interfering APs on the same channel. In other words,
\begin{equation}
\mu_M(k,m,k_m)={\mu_M(k,m,1)}/{k_m}.
\end{equation}
When $k_m>\beta$, $\mu_M(k,m, k_m)=0$. The normalization in Eq.\ (\ref{eq:rewardNorm}) leads to: $\mu_M(k,m, k_m) \in [0,1]$  for every AP $k \in \mathcal{K}$, on every channel $m\in\mathcal{M}$ and for every number of APs $k_m \in [\beta]$. Hence, $g_k^t(\boldsymbol{a^t}) \in [0,N]$.
 
\revised{In addition to receiving the achieved rewards, we assume that the feedback received by each AP $k$ from the MBS includes the total number of APs simultaneously accessing its chosen channels. In other words, for all channels $m$ such that $a_k^t(m)=1$, AP $k$ receives the total number of APs accessing channel $m$, i.e., receives $k_m=\sum_{k \in \mathcal{K}}a_k^t(m)$. {Note that this assumption is necessary for the correct estimation of the mean rewards, allowing APs to learn and settle on the optimal allocation. Moreover, since $\beta$ is normally kept small, feeding back to each AP $k$ the total number of APs simultaneously accessing its chosen channels  requires only a few bits.}}
 
APs make their decisions in a distributed manner observing neither the channels chosen by other APs nor the rewards received by other APs. Each AP $k$ can only observe the reward it gets on each of its chosen channels. 
Our aim is to propose a distributed algorithm allowing APs to organize their transmissions on the available channels, without communicating together, in such a way as to maximize the sum reward of the system. By definition,  the action profile  yielding the highest sum reward $\boldsymbol{a^*}$ is given by:
\begin{equation}
\boldsymbol{a^*}=\argmax_{\boldsymbol{a}\in \mathcal{A}} \sum_{k=1}^K\sum_{m=1}^M  a_k (m) \mu_M(k,m, k_m),
\end{equation}
\revised{where $\mathcal{A}$ is the action space of all APs, i.e., $\mathcal{A}=\prod_{k\in \mathcal{K}}\mathcal{A}_k$.}

The expected regret incurred during $T_C$ is the difference between the achieved reward  when playing $\boldsymbol{a^*}$ at all timeslots, and the actually achieved reward  by the learning players during the $T_C$ timeslots \cite{lattimore}. In our case, it is given by: 
\begin{equation}
\revised{\bar R= T_C\sum_{k,m} a^*_k (m) \mu_M(k,m, k^*_m) -\mathbb{E}\left(\sum\limits_{t=1}^{T_C}\sum_{k,m}  a_k (m) \mu_M(k,m, k_m)\right),}
\end{equation}
where $k^*_m$ is the optimal number of APs scheduled over channel $m$ under $\boldsymbol{a^*}$.

After $T_C$ timeslots, the APs receive a signal from the core network to terminate the channel allocation phase. At the end of the channel allocation phase, at most $\beta$ APs are scheduled over each channel $m\in \mathcal{M}$.  Moreover,  as an outcome of this first phase, each AP $k$ computes an estimate of its average channel gain over each channel $m$, denoted by $\hat{h}_{k,m}$. 

\subsection{Distributed Power Allocation}\label{subsec:powerModel}
Once settled over their chosen channels, the APs receive a signal from the core network to move to the power allocation stage. Since different frequency bands are allocated to  different channels, power allocation over each channel $m$ can be done independently of other channels $m' \in \mathcal{M}\setminus{\{m\}}$. In the following, we will focus on the power allocation over channel $m\in \mathcal{M}$, where the set of scheduled APs is $\mathcal{K}_m$. 

To simplify the distributed power allocation, we assume that each AP chooses, for each of its allocated channels, one SINR level among a fixed set of  $L\geq \beta$ available SINR levels, with $\boldsymbol \Gamma$ being the set of pre-determined available SINR levels. The AP then calculates the necessary power level $v_l$ for the chosen SINR level $\Gamma_l$. For successful SIC decoding, each power level can support one AP only. In other words, if multiple APs choose the same power level, SIC fails and the signals of all $K_m$ APs are not decodable.
Inspired by \cite{8085106}, it can be shown that, to satisfy Eq.\ (\ref{eq:gammaLevel}), the power level $v_l$ must be set as:
\begin{equation}\label{eq:powerLevel}
v_l=\Gamma_l N_0 B_c \prod\limits_{l'=l+1}^L\left(\Gamma_{l'}+1\right).
\end{equation} 
\revised{Note that the expression of $v_l$ is obtained by proceeding backwards and by induction from $v_L=\Gamma_L N_0B_c$.}

{The expression of $v_l$ ensures the SINR requirement $\Gamma_l$ when considering that an AP chooses each subsequent SINR requirement, hence the worst case scenario. Note that our setting allows for similar SINR levels. However, for similar or distinct SINR levels, the power levels chosen by APs need to be distinct to allow for SIC decoding.}

To ensure SIC stability, i.e., successful decoding of the received signals in descending order \cite{7982784}, the distributed power control scheme must ensure that the power of each signal scheduled for decoding at the BS is larger than the received power of the interference generated by the combination of the  remaining signals, i.e., $v_l> V_l$. From Eq.\ (\ref{eq:powerLevel}), the power level $v_l$ depends on the associated SINR level $\Gamma_l$ as well as on the interfering SINR levels $\Gamma_{l'}, l'=l+1,\ldots,L$. 
\begin{proposition}To ensure SIC stability, the available SINR levels must satisfy:
\begin{equation}\label{eq:sicStability}
\Gamma_l>\frac{2^{(L-l-1)}\times\Gamma_L}{\prod\limits_{l'=l+1}^L\left(\Gamma_{l'}+1\right)}.
\end{equation}
\end{proposition}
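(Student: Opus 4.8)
The plan is to recast the claimed inequality into an equivalent statement about the received power levels themselves, and then establish that statement by backward induction on $l$ using the SIC stability constraints $v_l>V_l$. First I would use the closed form in (\ref{eq:powerLevel}) together with $v_L=\Gamma_L N_0 B_c$ to rewrite the target bound. Since $v_l=\Gamma_l N_0 B_c\prod\limits_{l'=l+1}^{L}(\Gamma_{l'}+1)$, dividing both sides of the claimed inequality by the common positive factor $N_0 B_c$ and clearing the product shows that (\ref{eq:sicStability}) is equivalent to the compact condition $v_l>2^{(L-l-1)}v_L$. This reformulation is the key simplification: it eliminates the explicit product and noise terms and exposes the combinatorial structure behind the factor $2^{(L-l-1)}$.

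Next I would prove $v_l>2^{(L-l-1)}v_L$ for every $l\in\{1,\dots,L-1\}$ by induction running downward from $l=L-1$, invoking the stability requirement $v_l>V_l=\sum_{l'=l+1}^{L}v_{l'}$ at each level. For the base case $l=L-1$ the interference sum collapses to $V_{L-1}=v_L$, so stability gives $v_{L-1}>v_L=2^{0}v_L$, matching the claim. For the inductive step I would assume $v_{l'}>2^{(L-l'-1)}v_L$ for all $l'\in\{l+1,\dots,L-1\}$ and lower-bound the interference term as
\begin{equation*}
V_l=v_L+\sum_{l'=l+1}^{L-1}v_{l'}>v_L\left(1+\sum_{l'=l+1}^{L-1}2^{(L-l'-1)}\right).
\end{equation*}
Re-indexing the geometric sum by $j=L-l'-1$ yields $\sum_{j=0}^{L-l-2}2^{j}=2^{(L-l-1)}-1$, so the bracket equals $2^{(L-l-1)}$ and stability gives $v_l>V_l>2^{(L-l-1)}v_L$, closing the induction. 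Translating back through the equivalence of the first step recovers (\ref{eq:sicStability}); the level $l=L$ case is vacuous, since its right-hand side reduces to $\Gamma_L/2<\Gamma_L$.

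I expect the main obstacle to be organizational rather than conceptual: carrying out the geometric-sum bookkeeping so that the accumulated interference contributes exactly the factor $2^{(L-l-1)}$, and being careful that the argument produces a \emph{necessary} condition (stability implies (\ref{eq:sicStability})) rather than a sufficient one. In particular I would keep track of the fact that each stability constraint enters only as the lower bound $V_l\ge v_L+\sum_{l'=l+1}^{L-1}v_{l'}$ fed by the inductive hypotheses, and note the monotonicity $v_1>v_2>\cdots>v_L>0$, which follows from the $\Gamma_l$ being sorted in decreasing order and guarantees that all terms in the sum are positive and the bounds are consistent.
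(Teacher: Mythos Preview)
Your proof is correct and follows essentially the same route as the paper: a backward induction on $l$ that feeds the bounds already obtained into the interference term at the next level. The only difference is presentational---you first translate (\ref{eq:sicStability}) into the equivalent form $v_l>2^{(L-l-1)}v_L$ and then run the induction on power levels, which makes the geometric sum and the origin of the factor $2^{(L-l-1)}$ explicit, whereas the paper carries out the same induction directly in the $\Gamma_l$ variables by substituting the previous bound into the next stability constraint.
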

\begin{proof}
By proceeding backwards, to get $v_{L-1}>v_L$, the following must hold:
\begin{equation}\label{eq:sicStabL-1}
\Gamma_{L-1}>\frac{\Gamma_L}{\Gamma_L+1}=\frac{2^{(L-(L-1)-1)}\Gamma_L}{\Gamma_L+1}.
\end{equation}	
Similarly, to get $v_{L-2}>v_{L-1}+v_L$, the following must hold:
\revised{
\begin{equation}
\Gamma_{L-2}>\frac{\Gamma_{L-1}(\Gamma_L+1)+\Gamma_L}{(\Gamma_{L-1}+1)(\Gamma_L+1)}\overset{\text{(a)}}>\frac{\frac{\Gamma_L}{\Gamma_L+1} (\Gamma_L+1)+\Gamma_L}{(\Gamma_{L-1}+1)(\Gamma_L+1)}>\frac{2\Gamma_L}{(\Gamma_{L-1}+1)(\Gamma_L+1)}=\frac{2^{(L-(L-2)-1)}\Gamma_L}{\prod\limits_{l'=L-1}^L(\Gamma_{l'}+1)},
\end{equation}
where (a) follows from Eq.\ (\ref{eq:sicStabL-1}).}

To get $v_l>V_l=\sum\limits_{l'=l+1}^L v_{l'}$, assume that Eq.\ (\ref{eq:sicStability}) holds.
By induction, to get $v_{l-1}>\sum\limits_{l'=l}^L v_{l'}$, we must have:
\begin{equation}
\Gamma_{l-1}>\frac{2^{(L-(l-1)-1)}\Gamma_L}{\prod\limits_{l'=l}^L(\Gamma_{l'}+1)}.
\end{equation}
\end{proof}
Knowing the available SINR levels, each AP $k \in \mathcal{K}_m$ calculates the associated received power levels using Eq.\ (\ref{eq:powerLevel}). Then, using the estimated average channel gain over $m$, $\hat{h}_{k,m}$, AP $k \in \mathcal{K}_m$  calculates the necessary transmit power for each power level $v_l$, $p_{k,m,l}$, according to:
\begin{equation}
p_{k,m,l}={v_l}/{\hat{h}_{k,m}^2}.
\end{equation}
Each AP is assumed to have a power budget per channel $P_k^{m}$. Hence, AP $k$ 
can transmit over channel $m$ using  power level $v_l$ if $p_{k,m,l}\leq P_k^m$. 
AP $k \in \mathcal{K}_m$ builds the set of possible power levels, $\mathcal{P}_{k,m}^a$, where $ \mathcal{P}_{k,m}^a=\{v_l | ~  p_{k,m,l}\leq P_k^m, l\in[L]\}$.
Note that the set of possible power levels are AP-dependent because of their dependency on the estimated average channel gain of each AP, $\hat{h}_{k,m}$, and on the AP power budget. 

The power allocation among APs on the same channel consists of APs  choosing SINR levels, and hence received power levels, in a distributed manner, and without any inter-AP coordination. Since APs choosing  the same SINR level result in an unsuccessful SIC decoding, the APs must aim at organizing their transmissions using different SINR levels. For this purpose, the power allocation on each channel is modeled using the MAB framework with single play and zero-reward on collision. Over channel $m$, the set of players is $\mathcal{K}_m$ and the set of arms is the set of power levels \revised{$\mathcal{VL}=\{v_l, l=1,\ldots,L\}$}. Since $L=|\mathcal{VL}|\geq \beta \geq K_m=|\mathcal{K}_m|$, a solution where each AP accesses one power level, without collision, is achievable. 
At each timeslot, each AP $k \in \mathcal{K}_m$ chooses an action $a_{k,m}^t$, i.e., a power level $v_l \in \mathcal{P}_{k,m}^a$, and transmits using $p_{k,m,l}$. 
The action space of AP $k$ is 
$\mathcal{P}_{k,m}^a$.  Let $\boldsymbol{a}_m^t$ denote the strategy chosen by all APs in $\mathcal{K}_m$ over channel $m$ at timeslot $t$. \revised{ Upon choosing action $a_{k,m}^t \in \boldsymbol{a}_m^t$, AP $k$ receives the following average reward on channel $m$:}
\revised{\begin{equation}
g_{k,m}^t(\boldsymbol{a}^t)=\mu_P(k, m, a_{k,m}^t)\eta(\boldsymbol{a}_m^t), 
\end{equation}}
where $\mu_P(k, m, a_{k,m}^t)$ is the reward of AP $k$ when choosing $a_{k,m}^t$. \revised{Note that the actual value of the received reward by AP $k$ when choosing action $a_{k,m}^t$ on channel $m$ at timeslot $t$ is drawn from a uniform distribution with mean $\mu_P(k, m, a_{k,m}^t)$.}

The mean reward $\mu_P(k, m, a_{k,m}^t)$ is chosen in a way to strike a trade-off between SINR maximization and transmit power minimization. Therefore, it is set as:
\begin{equation}
\mu_P(k,m,  a_{k,m}^t=v_l)= w_k^1 \frac{\Gamma_l}{\Gamma_{max}}+w_k^2\frac{1}{p_{k,m,l}~\max\limits_{k,m,l}(\frac{1}{p_{k,m,l}})},
\end{equation}
where $w_k^1$ and $w_k^2$ are weight parameters relative to AP $k \in \mathcal{K}_m$ satisfying $w_k^1+w_k^2=1$. \revised{ The variable $\Gamma_{max}$ is the highest available SINR, i.e., $\Gamma_{max}=\Gamma_1$.} Note that $\mu_P(k,m, a_{k,m}^t) \in [0,1]$ and is not known by the AP in advance. \revised{Let $\mathcal{N}^m_{v_l}(\boldsymbol{a}_m^t)$ be the set of APs choosing power level $v_l$ at timeslot $t$, i.e., $\mathcal{N}^m_{v_l}(\boldsymbol{a}_m^t)=\{k \in \mathcal{K}_m ~|~ a_{k,m}^t=v_l\}$. The variable $\eta(\boldsymbol{a}_m^t)$ is the collision indicator of the strategy profile of all APs, $\boldsymbol{a}_m^t$, i.e., 
$\eta(\boldsymbol{a}_m^t)=
1$ if $|\mathcal{N}^m_{a_{k,m}^t=v_l}(\boldsymbol{a}_m^t)|\leq 1,  \forall ~ v_l \in \mathcal{VL}$, and 0 otherwise.} \revised{Note that no feedback regarding the  value of the collision indicator $\eta(\boldsymbol{a}_m^t)$ is necessary. In fact, in the case of collisions, the MBS  does not have to return any feedback to the colliding APs who will assume a zero reward is achieved. When no collision takes place, the MBS returns only the value  of the mean reward to the AP since the collision indicator is equal to one in the case of no collision.}
\vspace{0.05cm}

APs choose power levels in a distributed manner without any coordination, 
with each AP only  observing the reward received on the chosen power level. 
The proposed power allocation scheme aims at maximizing the sum reward of the system. Let $\boldsymbol{a}_{m}^{*P}$ be the action profile yielding the highest sum reward over channel $m$:
\revised{\begin{equation}
\boldsymbol{a}_{m}^{*P}=\argmax\limits_{\boldsymbol{a}_m\in \mathcal{P}^a_m}\sum\limits_{k\in \mathcal{K}}\mu_P(k,m, a_{k,m}^t)\,\eta(\boldsymbol{a}_m^t),
\end{equation}}
\revised{where $\mathcal{P}^a_m$ is the action space of all APs scheduled on channel $m$, i.e., $\mathcal{P}^a_m=\prod_{k\in \mathcal{K}_m}\mathcal{P}^a_{k,m}$.}

The expected regret incurred during the time horizon $T_P$ over all $M$ channels is given by:
\begin{equation}
\revised{\bar{R}_p=\sum\limits_{m\in\mathcal{M}}\left \{T_P\sum\limits_{k\in \mathcal{K}}\mu_P(k,m, a^{*P}_{k,m})\right.-\left.\mathbb{E}\left(\sum\limits_{t=1}^{T_P}\sum\limits_{k}\mu_P(k,m, a_{k,m}^t)\,\eta(\boldsymbol{a}_m^t)\right)\right\}.}
\end{equation}

\section{Proposed Solution}\label{sec:solution}
\subsection{Proposed Algorithm for the Channel Allocation Problem}
Since the time horizon $T_C$ is not necessarily known in advance, the proposed solution, presented in Algorithm  \ref{alg1}, proceeds in epochs, each epoch consisting of  three phases, namely, \textit{exploration, matching and exploitation}. The exploration phase aims at estimating the previously unknown means of each channel, as well as the number of APs competing for system resources. During this phase, each AP uniformly accesses one channel at a time to estimate its mean reward. AP $k$ accessing channel $m$ gets as feedback the achieved reward on $m$ as well as the total number of APs simultaneously accessing channel $m$. This phase runs for a constant number of timeslots given by $T_C^0$. Upon termination, all APs have an estimate $\boldsymbol{\hat{\mu}}_M$ of the means of the channels and of the channel gain experienced over each channel. Each AP also calculates an estimate of the number of APs $\hat{K}$, as was done in \cite{meghana}. These estimated means and number of APs are used in the second phase of the algorithm where APs  play a non-cooperative game with the aim of maximizing the achieved sum rewards. The  estimated reward means are taken to be the actual utilities achieved  in the matching phase. 
In other words, after choosing a channel $m$, if the received reward is non-zero, AP $k$ assumes that this reward is equal to: 
\begin{equation} \label{eq:avReward}
u_k(m)= \hat{\mu}_M(k,m,k_m).
\end{equation}    

The dynamics of this matching phase, adopted from \cite{pareto}, are described in section \ref{subsec:matching}. The matching phase runs for $c_1 l^{1+\delta}$ frames, \revised{where $c_1$ and $\delta$ are constants and $l$ is the epoch number}. 
 The third and final phase is an exploitation phase in which  APs settle on the channels that resulted in the best performance in the previous matching phase. The exploitation phase runs for $c_2 2^l$ timeslots, \revised{$c_2$ being a constant}.

\begin{algorithm}[!ht]
	\caption{}\small 
	\begin{algorithmic}[1]
		\STATEx \textbf{Initialization:} Set $\hat{\mu}_M(k,m, k_m)=0, ~  \forall k\in \mathcal{K}, m \in \mathcal{M}, k_m \in [\beta]$. Set $b^t_k=0, ~ \forall k \in \mathcal{K}$. Let $\epsilon >0$ and $c\geq KN$.
		\FOR  {$l=1,\ldots,L_C$}
		\STATEx \textbf{1- Exploration Phase:} \FOR {$t=1:T_C^0$}
		\STATE Choose one channel $m \in \mathcal{M}$ uniformly.
		\STATE \begin{varwidth}[t]{\linewidth}Receive the achieved reward $x_{k}^t(m)$, and the total number of APs, $k_m$, accessing channel $m$ simultaneously.\end{varwidth}
		\STATE  \begin{varwidth}[t]{\linewidth}$W^t_k(m,k_m)=W^{t-1}_k(m,k_m)+x_{k}^t(m)$,\\
			$co_k^t(k_m)=co_k^t(k_m)+1$.\end{varwidth}
		\IF {$k_m>1$}
		\STATE  $b^t_k=b^{t-1}_k+1$\label{stepAlg:14}
		\ENDIF
		\ENDFOR
		\STATE \begin{varwidth}[t]{\linewidth} Estimate means: $\hat{\mu}_M(k,m,k_m)=\frac{W^t_k(m,k_m)}{co_k^t(k_m)}, \forall ~ k_m \in [\beta]$. \end{varwidth}
		\STATE\begin{varwidth}[t]{\linewidth}  Estimate the number of APs according to: $\hat{K}=\min\left\{\textrm{round}\left(\frac{\log\left(\frac{T_C^0-b^t_k}{T_C^0}\right)}{\log\left(1-\frac{1}{M}\right)}+1\right),\beta M\right\}$. \end{varwidth}
		\STATEx\begin{varwidth}[t]{\linewidth} \textbf{2- Matching Phase:} for the next $c_1l^{1+\delta}$ frames, play according to the dynamics described in section \ref{subsec:matching}. \end{varwidth}
		\STATE\begin{varwidth}[t]{\linewidth}  If $S_k=C$,  choose the action to play according to Eq.\ (\ref{contentChoose}). If $S_k=D$, choose the action according to Eq.\ (\ref{discontentChoose}). \end{varwidth}
		\STATE \begin{varwidth}[t]{\linewidth}  If the achieved reward for some chosen channel ${u_{k}(m)}$, found from Eq.\ (\ref{eq:avReward}), is 0, the AP becomes discontent as per Eq.\ (\ref{zeroTransition}). \end{varwidth}
		\STATE \begin{varwidth}[t]{\linewidth}   If $\boldsymbol{a}_k\neq \boldsymbol{\bar{a}}_k$ or $\boldsymbol{u}_k\neq \boldsymbol{\bar{u}}_k$ or  player $k$ is discontent, the state transition happen according to Eq.\ (\ref{cdTransition}). \end{varwidth}
		\STATE\begin{varwidth}[t]{\linewidth}   Each AP keeps a counter of the number of times each action $\boldsymbol{a_k'}$ was played and resulted in it being content: 
		\begin{equation}
		F_{k}^l(\boldsymbol{a_k'})=\sum_{t=1}^{c_2l^{1+\delta}}\mathbb{I}\left(\boldsymbol{a}_k^t=\boldsymbol{a_k'}, S_k^t=C\right), 
		\end{equation}
		with $\mathbb{I}$ being the indicator function.\end{varwidth}
		\STATEx \textbf{3- Exploitation phase:} for $c_2 2^l$ timeslots:
		\STATE Play the action $\boldsymbol{a_k^{l*}}=\argmax\limits_{\boldsymbol{a}_k\in \mathcal{A}_k} F_{k}^l(\boldsymbol{a}_k)$.
		\ENDFOR
	\end{algorithmic}
	\label{alg1}
\end{algorithm}
\normalsize
\subsection{Matching Dynamics}\label{subsec:matching}
Each AP $k$ is associated with a state $[\boldsymbol{\bar{a}}_k, \boldsymbol{\bar{u}}_k, S]$. The baseline action of AP $k$ is $\boldsymbol{\bar{a}}_k \in \{0,1\}^{M\times 1}$, such that $\sum_{m=1}^M\bar{a}_{k}(m)=N$. The baseline utility of AP $k$ is $\boldsymbol{\bar{u}}_k, \textrm{such that} ~ |\boldsymbol{\bar{u}}_k|=N$. Variable $S \in \{C,D\}$ is the mood of AP $k$ and reflects whether $k$ is content or discontent with the current action and utility. At each frame of the matching phase, each AP chooses an action according to the game dynamics and receives a reward that depends on the collective choices of all the APs. Define $u_{k,\textrm{max}}=\argmax\limits_{\boldsymbol{a}}\sum_{m=1}^Ma_{k}(m)\mu_M(k,m,k_m)$, where $u_{k,\textrm{max}}$ is the highest reward achievable by AP $k$, with a number of estimated APs given by  $\hat{K}$. 

At each frame $t$ during the matching phase, AP $k$ adheres by the following dynamics to decide on the action to choose:  
\begin{itemize}
	\item A content AP plays its baseline action with high probability: 
	\begin{equation}\label{contentChoose}
	p_k^{\boldsymbol{a}_k}=
	\begin{cases}
	\frac{\epsilon^c}{|\mathcal{A}_k|-1}, \quad &\textrm{if} ~  \boldsymbol{a}_k\neq \boldsymbol{\bar{a}}_k, \\
	1-\epsilon^c, \quad &\textrm{if} ~  \boldsymbol{a}_k= \boldsymbol{\bar{a}}_k.,
	\end{cases}
	\end{equation}
	where $\epsilon>0$ is a small perturbation and $c$ is a constant satisfying $c\geq KN$.%
	\item A discontent AP chooses its action uniformly at random:
	\begin{equation}\label{discontentChoose}
	p_k^{\boldsymbol{a}_k}=\frac{1}{|\mathcal{A}_k|}, \quad \forall ~  \boldsymbol{a}_k \in \mathcal{A}_k.
	\end{equation}
\end{itemize} 
\revised{In Eq.\ (\ref{contentChoose}) and (\ref{discontentChoose}), $p_k^{\boldsymbol{a}_k}$ is the probability with which AP $k$ chooses action $\boldsymbol{a}_k$.}

After deciding on the action and observing the reward $u_k(m)$ for  chosen channels, the state transition of each AP $k$ occurs according to: 
\begin{itemize}
	\item If $\boldsymbol{a}_k= \boldsymbol{\bar{a}}_k$ and $\boldsymbol{u}_k= \boldsymbol{\bar{u}}_k$,  a content AP remains content: 
	\begin{equation}\label{contentTransition}
	[\boldsymbol{\bar{a}}_k, \boldsymbol{\bar{u}}_k, C] \rightarrow  [\boldsymbol{\bar{a}}_k, \boldsymbol{\bar{u}}_k, C].
	\end{equation}
	\item If $u_k(m)=0$ for some $m=1,\ldots,N$,  AP $k$ becomes discontent with probability one.
	\begin{equation}\label{zeroTransition}
	[\boldsymbol{\bar{a}}_k, \boldsymbol{\bar{u}}_k, C/D] \rightarrow  [\boldsymbol{a}_k, \boldsymbol{u}_k, D].
	\end{equation}
	\item If $\boldsymbol{a}_k\neq \boldsymbol{\bar{a}}_k$ or $\boldsymbol{u}_k\neq \boldsymbol{\bar{u}}_k$ or  player $k$ is discontent,  the state transitions occur according to: 
	\begin{equation}\label{cdTransition}
	[\boldsymbol{\bar{a}}_k, \boldsymbol{\bar{u}}_k, C/D] \rightarrow
	\begin{cases} 
	[\boldsymbol{a}_k, \boldsymbol{u}_k, C] \quad \textrm{w.p.}~ \epsilon^{u_{k,\textrm{max}}-\sum\limits_{n=1}^N u_{k,n}}, \\
	[\boldsymbol{a}_k, \boldsymbol{u}_k, D] \quad \textrm{w.p.}~  1-\epsilon^{u_{k,\textrm{max}}-\sum\limits_{n=1}^N u_{k,n}}.
	\end{cases}
	\end{equation}
\end{itemize} 

\subsection{Proposed Solution for the Distributed Power Allocation}
A simplified version of Algorithm \ref{alg1} can be used to solve the power allocation problem over each channel $m$. The solution is divided into three phases:
\begin{enumerate}
	\item Exploration phase: This phase runs for $T^0_P$ timeslots and aims at estimating the reward of each power value. During this phase, each AP chooses each of its possible power levels, i.e., power levels in $\mathcal{P}_k^a$, uniformly at random. Upon termination, APs have estimates of the reward associated to each power value, denoted by $\boldsymbol{\hat\mu}_P$.
	\item Matching phase:  In this phase, APs play a non-cooperative game according to the dynamics presented in Section \ref{subsec:matching}, after replacing $\mathcal{A}_k$ in Eq.\ (\ref{contentChoose}) and (\ref{discontentChoose}) by $\mathcal{P}^a_{k,m}$. Each AP keeps a counter of the number of times each action was played and resulted in content behavior.
	\item Exploitation phase: During this phase, each AP $k$ exploits the action, i.e., the power level, that resulted in the most content behavior during the matching phase. 
\end{enumerate}

\section{Regret Analysis}\label{sec:regret}
The time horizon of the channel allocation phase can be lower bounded by \cite{got}:
\begin{equation}
T_C\geq\sum_{l=1}^{L_C-1}(T_C^0+c_1l^{1+\delta}+c_2 2^{l})\geq c_2(2^{L_C}-2),
\end{equation}
where $L_C$ is the total number of epochs occurring within $T_C$ and upper bounded by:
\begin{equation}
L_C\leq \log\left({T_C}/{c_2}+2\right).
\end{equation}
Similarly, the number of epochs, $L_P$, occurring within the time horizon $T_P$ dedicated to the power allocation stage is upper bounded by
$L_P\leq \log(T_P/c_2+2).$
\subsection{Regret in the Exploration Phase}
In the exploration phase of the channel allocation, each AP samples channels uniformly to get estimates of their means. Even though the purpose of this work is to assign to each AP $N$ channels at each timeslot, the number of channels sampled by each AP at a timeslot is set to one in the exploration phase. 
The expected regret incurred by all APs in the exploration phase of the channel allocation, $R_C^1$, can be upper bounded by:
\begin{equation}
R_C^1\leq \sum_{l=1}^{L_C}KNT_C^0\leq KNT_C^0  \log\left({T_C}/{c_2}+2\right).
\end{equation}
Similarly, the expected regret incurred by all APs in the exploration phase of the power allocation, $R_P^1$, can be upper bounded by:
\begin{equation}
R_P^1\leq\sum\limits_{m=1}^M\sum\limits_{l=1}^{L_P}K_m T_P^0\leq K T_P^0\log(T_P/c_2+2).
\end{equation}
\subsection{Regret in the Matching Phase}
The expected regret in the matching phase of the channel allocation, $R_C^2$, can be upper bounded by:
\begin{equation}
R_C^2\leq \sum_{l=1}^{L_C}KNc_1l^{1+\delta}\leq 
KNc_1\log^{2+\delta}\left({T_C}/{c_2}+2\right).
\end{equation}
Similarly, the expected regret in the matching phase of the power allocation, $R_P^2$, can be upper bounded by:
\begin{equation}
R_P^2\leq \sum_{m=1}^{M}\sum_{l=1}^{L_C}K_mc_1l^{1+\delta}\leq 
Kc_1\log^{2+\delta}\left({T_P}/{c_2}+2\right).
\end{equation}
\subsection{Regret in the Exploitation Phase}
In the exploitation phase of epoch $l$ of the channel allocation, each AP $k$ plays the action that it played the most and resulted in content behavior in the matching phase of epoch $l$. The exploitation phase fails in two cases:
\begin{enumerate}
	\item If the exploration phase of epoch $l$ fails: This happens with a probability $\leq 4(M\beta)^2 e^{-l}$ as shown in Lemma \ref{lemma2}.
	\item If the most played action of the matching epoch differs from the optimal action: This happens with a probability $\leq A_1 e^{-l^{1+\delta}}$ as shown in Lemma \ref{lemma6}.
\end{enumerate}
The expected regret incurred by all APs in the exploitation phase can be upper bounded by:
\begin{equation}
\begin{aligned}
R_C^3&\leq \sum_{l=1}^{L_C}KNc_22^l\left(4(M\beta)^2 e^{-l}+A_1 e^{-l^{1+\delta} }\right)\leq A_3,
\end{aligned}
\end{equation}
\revised{where $A_1, A_3$ are constants.}

Similarly, the regret incurred by the APs in the exploitation phase of the power allocation is $R_P^3\leq A_3$.
\subsection{Regret of the Proposed Technique}
\begin{theorem}
	The expected regret of the proposed allocation solution can be upper bounded as:
	\begin{equation}
	R\leq R_C^1+ R_C^2+R_C^3+R_P^1+ R_P^2+R_P^3=\mathcal{O}\left(\log^{2+\delta}(T)\right).
	\end{equation}
\end{theorem}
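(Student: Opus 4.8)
The plan is to prove the theorem by an additive decomposition of the total regret followed by term-by-term substitution of the bounds established in the preceding subsections. First I would argue that, since the two-phase algorithm runs the channel-allocation stage over the disjoint horizon $T_C$ and the power-allocation stage over the disjoint horizon $T_P$, the total expected regret $R$ splits additively as $R=\bar{R}+\bar{R}_p$. Within each stage the regret further decomposes over the exploration, matching, and exploitation phases of every epoch, because these phases occupy disjoint sets of timeslots; hence $\bar{R}=R_C^1+R_C^2+R_C^3$ and $\bar{R}_p=R_P^1+R_P^2+R_P^3$. For the power stage I would additionally use that the per-channel subproblems are independent, since they occupy different frequency bands, so their regrets sum; this is exactly the outer sum over $m\in\mathcal{M}$ that appears in the bounds on $R_P^1$ and $R_P^2$.

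Next I would simply insert the six bounds already derived. The exploration terms satisfy $R_C^1\leq KNT_C^0\log(T_C/c_2+2)$ and $R_P^1\leq KT_P^0\log(T_P/c_2+2)$, both of order $\mathcal{O}(\log T)$. The matching terms satisfy $R_C^2\leq KNc_1\log^{2+\delta}(T_C/c_2+2)$ and $R_P^2\leq Kc_1\log^{2+\delta}(T_P/c_2+2)$, both of order $\mathcal{O}(\log^{2+\delta}T)$. The exploitation terms satisfy $R_C^3\leq A_3$ and $R_P^3\leq A_3$, both $\mathcal{O}(1)$. Since $T_C\leq T$ and $T_P\leq T$, every logarithm is bounded by $\log(T/c_2+2)$, so collecting terms gives $R=\mathcal{O}(\log^{2+\delta}T)$, with the matching-phase contribution dominating.

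The step I expect to require the most care is justifying that the exploitation regret $R_C^3$ is a constant. This rests on the bound $R_C^3\leq\sum_{l=1}^{L_C}KNc_22^l\left(4(M\beta)^2 e^{-l}+A_1 e^{-l^{1+\delta}}\right)$, which I would control by showing that the two resulting series converge. The first piece reduces to $\sum_l (2/e)^l$, a convergent geometric series since $2/e<1$; the second piece has general term $e^{l\ln 2-l^{1+\delta}}$, which decays super-exponentially because $l^{1+\delta}$ eventually dominates $l\ln 2$ for any $\delta>0$, so that series also converges to a finite constant. It is precisely this cancellation between the exponentially growing exploitation length $c_22^l$ and the exponentially decaying failure probabilities supplied by Lemma \ref{lemma2} and Lemma \ref{lemma6} that prevents the exploitation regret from growing with $T$; the identical argument yields $R_P^3\leq A_3$. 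Once these constants are in hand, the theorem follows by collecting the dominant $\log^{2+\delta}T$ order among the six contributions.
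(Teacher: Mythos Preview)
Your proposal is correct and follows essentially the same approach as the paper: the theorem is obtained by summing the six phase-wise regret bounds derived in the preceding subsections, with the matching-phase contribution of order $\log^{2+\delta}T$ dominating. Your explicit convergence argument for the exploitation series (the geometric $(2/e)^l$ piece and the super-exponentially decaying $e^{l\ln 2 - l^{1+\delta}}$ piece) merely fills in a detail that the paper absorbs into the constant $A_3$ without further comment.
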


\section{Exploration Phase}\label{sec:expPhase}
The exploration phase is performed so APs learn estimates of the channel mean reward in the channel allocation phase, and of the power level mean reward in the power allocation phase. Moreover, by keeping track of the number of times each channel was accessed with one or more other APs in the channel allocation phase, the APs can estimate the total number of APs in the system. In this section, we  find the minimum  length of the exploration phase ensuring an accurate estimation of  both the reward means and the number of APs.

\subsection{Estimation of the Reward Means }

Since the estimation may not always be perfect, the result of the assignment with the estimated means ($\boldsymbol{\hat{\mu}}_M$ and $\boldsymbol{\hat{\mu}}_P$) might differ from the result of the assignment calculated with the true means ($\boldsymbol{{\mu}}_M$ and $\boldsymbol{{\mu}}_P$). However, if the estimation inaccuracy is kept small as in \cite{magesh} and \cite{got}, the result of the assignment would not be affected. 

\begin{lemma}\label{lemma1} Let $J_M^1$ and $J_M^2$ be the sum reward achieved by the best channel assignment and the second best channel assignment and let $\Delta_M=\frac{J_M^1-J_M^2}{2KN}$. \vspace{0.04cm} Moreover, let $J_P^1$ and $J_P^2$ be the sum reward achieved by the best power allocation on each channel $m$ and the second best power assignment and let $\Delta_P=\frac{J_P^1-J_P^2}{2K_m}$. If the difference between the estimated and the correct  reward means satisfies:
	\begin{equation}
	\begin{aligned}\label{deltaC}
	|\mu_M(k,m, k_m)-\hat{\mu}_M(k,m, k_m)|< \Delta_M, 
	\forall k\in\mathcal{K}, m\in\mathcal{M}, k_m\in[\beta],
	\end{aligned}
	\end{equation}
	\begin{equation}\label{deltaP}
	\begin{aligned}
	|\mu_P(k,m,v_l)-\hat{\mu}_P(k,m,v_l)|< \Delta_P, \forall k\in\mathcal{K}_m, m\in\mathcal{M}, v_l\in\mathcal{VL},
	\end{aligned}
	\end{equation}  then, the best assignment result does not change due to the estimation inaccuracy.
\end{lemma}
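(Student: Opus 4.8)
The plan is to establish robustness of the optimal assignment to uniformly bounded estimation errors via a separation (``gap'') argument. I will carry out the argument for the channel allocation; the power allocation case is identical after replacing $\mu_M$, $\hat\mu_M$, $KN$, $\Delta_M$ by $\mu_P$, $\hat\mu_P$, $K_m$, $\Delta_P$ respectively. For any assignment $\boldsymbol{a}\in\mathcal{A}$, denote its true and estimated sum rewards by $J_M(\boldsymbol{a})=\sum_{k,m}a_k(m)\,\mu_M(k,m,k_m)$ and $\hat J_M(\boldsymbol{a})=\sum_{k,m}a_k(m)\,\hat\mu_M(k,m,k_m)$, where in both expressions the occupancy $k_m=\sum_{k}a_k(m)$ is the one induced by the \emph{same} assignment $\boldsymbol{a}$, so that the two sums are compared entry by entry on identical index triples $(k,m,k_m)$.

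The first step is to bound the discrepancy between these two quantities uniformly over all assignments. Since the action constraint forces each of the $K$ APs to activate exactly $N$ channels, every assignment contains precisely $\sum_{k,m}a_k(m)=KN$ active reward terms. Applying the triangle inequality and then the per-entry bound of Eq.\ (\ref{deltaC}), which is assumed to hold for all $k\in\mathcal{K}$, $m\in\mathcal{M}$, $k_m\in[\beta]$, gives
\begin{equation}
\left|J_M(\boldsymbol{a})-\hat J_M(\boldsymbol{a})\right| \le \sum_{k,m}a_k(m)\,\left|\mu_M(k,m,k_m)-\hat\mu_M(k,m,k_m)\right| < KN\,\Delta_M = \tfrac{1}{2}\left(J_M^1-J_M^2\right),
\end{equation}
where the final equality is just the definition $\Delta_M=(J_M^1-J_M^2)/(2KN)$. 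Thus re-evaluating any assignment with the estimated means perturbs its sum reward by strictly less than half the optimality gap.

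The second step uses this bound to separate the optimum from every competitor. Let $\boldsymbol{a}^*$ be the true optimal assignment, so $J_M(\boldsymbol{a}^*)=J_M^1$, and let $\boldsymbol{a}\neq\boldsymbol{a}^*$ be any other assignment, so $J_M(\boldsymbol{a})\le J_M^2$. The bound yields $\hat J_M(\boldsymbol{a}^*)>J_M^1-\tfrac{1}{2}(J_M^1-J_M^2)=\tfrac{1}{2}(J_M^1+J_M^2)$, while $\hat J_M(\boldsymbol{a})<J_M^2+\tfrac{1}{2}(J_M^1-J_M^2)=\tfrac{1}{2}(J_M^1+J_M^2)$. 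The two estimated values therefore lie on opposite sides of the threshold $\tfrac{1}{2}(J_M^1+J_M^2)$, so $\hat J_M(\boldsymbol{a}^*)>\hat J_M(\boldsymbol{a})$ for every $\boldsymbol{a}\neq\boldsymbol{a}^*$; hence $\boldsymbol{a}^*$ remains the maximizer under $\boldsymbol{\hat\mu}_M$, which is the claim. Repeating the identical chain with $K_m$ active terms and the bound of Eq.\ (\ref{deltaP}) settles the power-allocation case.

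There is no genuinely hard step; this is a standard sensitivity argument. The only point requiring care is the exact count of active terms---$KN$ for the channel problem and $K_m$ for the power problem on channel $m$---because it is precisely this count cancelling against the $2KN$ (resp.\ $2K_m$) in the denominator of $\Delta_M$ (resp.\ $\Delta_P$) that produces the factor $\tfrac{1}{2}$, and it is that factor which makes the perturbation intervals around the two optima disjoint. One should also note that the argument is meaningful only when $J_M^1>J_M^2$ (a unique optimum); in the degenerate tie case $\Delta_M=0$ and the hypothesis cannot be satisfied, so nothing is asserted.
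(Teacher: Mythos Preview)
Your proof is correct and follows essentially the same approach as the paper's: both bound the perturbation of any assignment's sum reward by $KN\Delta_M$ (using that each assignment has exactly $KN$ active terms), then show that this perturbation is strictly less than half the gap $J_M^1-J_M^2$, so the optimal and any suboptimal assignment remain separated under the estimated means. Your presentation is slightly tidier in making the threshold $\tfrac{1}{2}(J_M^1+J_M^2)$ explicit, and your remark on the degenerate tie case $J_M^1=J_M^2$ is a worthwhile addition that the paper omits.
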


\begin{proof}
	See Appendix\ \ref{appendix1}.
\end{proof}
%
Next, we upper bound the probability of error, i.e., the probability of having channel reward estimates (resp. power level reward estimates) that do not satisfy the condition in\ (\ref{deltaC}) (resp. (\ref{deltaP})) in the exploration epoch $l$. We also provide  a lower bound of the length of the exploration epoch $T_{\boldsymbol{\hat{\mu}_M}}$ in the channel allocation phase, and $T_P^0$ in the power allocation phase.

\begin{lemma}\label{lemma2}
	If $T_{\boldsymbol{\hat{\mu}_M}}=
	\left\lceil{\frac{2Me^{\left(\frac{K-1}{M-1}\right)}}{\Delta_M^2\left(M-1\right)^{1-\beta}}}\right\rceil$, \vspace{0.05cm} all players have an estimate of the channel means satisfying the condition in\ (\ref{deltaC}), with probability $\geq 1-\gamma_{e,l}^M$, \revised{ where $\gamma^M_{e,l}$ is  the probability of error in the $l^{\textrm{th}}$ exploration phase of the uncoordinated channel access.} Moreover, $ \gamma^M_{e,l}\leq 4(M\beta)^2 e^{-l}$. 
	
	For the power allocation exploration phase, if $T_P^0=
	\left\lceil{\frac{2Le^{\left(\frac{\beta-1}{L-1}\right)}}{\Delta_p^2}}\right\rceil$, all players have an estimate of the power level means satisfying the condition in\ (\ref{deltaP}),  with probability $\geq 1-\gamma^P_{e,l}$, \revised{where $\gamma^P_{e,l}$,  is  the probability of error in the $l^{\textrm{th}}$ exploration phase of the power allocation, } upper bounded by $ 4\beta L e^{-l}$.
\end{lemma}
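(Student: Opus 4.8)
The plan is to bound the failure event---that some estimate violates the accuracy requirement (\ref{deltaC})---by a union bound over all triples $(k,m,k_m)\in\mathcal{K}\times\mathcal{M}\times[\beta]$, controlling each term with a two-stage concentration argument. Because the reward realizations are bounded in $[0,1]$ and drawn i.i.d.\ from a uniform distribution with the correct mean, the tool for the second stage is Hoeffding's inequality (see \cite{lattimore}): conditioned on having collected $n$ samples of the event ``AP $k$ on channel $m$ together with $k_m$ APs,'' one has $P(|\hat{\mu}_M(k,m,k_m)-\mu_M(k,m,k_m)|\geq \Delta_M \mid n)\leq 2e^{-2n\Delta_M^2}$. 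The crux, and the reason for a first stage, is that $n$ is itself random and must be shown to be large with high probability.

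First I would compute the per-timeslot probability of generating a usable sample. Since every AP selects one of the $M$ channels uniformly and independently during exploration, the probability that AP $k$ lands on $m$ while exactly $k_m-1$ of the remaining $K-1$ APs also select $m$ is
\[
p_{k,m,k_m}=\frac{1}{M}\binom{K-1}{k_m-1}\left(\frac{1}{M}\right)^{k_m-1}\left(1-\frac{1}{M}\right)^{K-k_m}.
\]
I would lower-bound this uniformly over all triples via $\binom{K-1}{k_m-1}\geq 1$, $(1/M)^{k_m-1}\geq (1/M)^{\beta-1}$ (since $k_m\leq\beta$), and $(1-1/M)^{K-k_m}\geq (1-1/M)^{K-1}\geq e^{-(K-1)/(M-1)}$, where the last step follows from $(1+\frac{1}{M-1})^{K-1}\leq e^{(K-1)/(M-1)}$. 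This produces a uniform lower bound $p_{\min}$ of the same order as the reciprocal (up to the $2/\Delta_M^2$ factor) of the stated length $T_{\boldsymbol{\hat{\mu}_M}}$, which is exactly why the constants $M$, $(M-1)^{1-\beta}$ and $e^{(K-1)/(M-1)}$ appear in $T_{\boldsymbol{\hat{\mu}_M}}$.

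Next, exploiting that the sufficient-statistics counters $W^t_k$ and $co^t_k$ persist across epochs, so that by the end of epoch $l$ each AP has performed roughly $l\,T_{\boldsymbol{\hat{\mu}_M}}$ uniform selections, the expected sample count for each triple is at least of order $l\,T_{\boldsymbol{\hat{\mu}_M}}\,p_{\min}=\Theta(l/\Delta_M^2)$ by the calibration of $T_{\boldsymbol{\hat{\mu}_M}}$. A Chernoff bound on this binomial count then shows the actual count drops below half its mean only with probability exponentially small in $l$; on the complementary (good) event $n=\Omega(l/\Delta_M^2)$, Hoeffding yields a deviation probability $\leq 2e^{-\Omega(l)}$. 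Summing the two bad events gives an $\mathcal{O}(e^{-l})$ per-triple bound, and a union bound over the $KM\beta$ triples---together with the feasibility constraint $K\leq KN\leq \beta M$, which forces $KM\beta\leq (M\beta)^2$---absorbs the cardinality and Hoeffding's constant into the claimed $\gamma^M_{e,l}\leq 4(M\beta)^2 e^{-l}$.

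Finally, the power-allocation statement follows from the same template applied on each channel $m$: each AP in $\mathcal{K}_m$ samples uniformly among its $\leq L$ admissible levels, and to estimate $\mu_P(k,m,v_l)$ it needs a \emph{collision-free} draw of $v_l$, which occurs with probability at least $\frac{1}{L}(1-\frac{1}{L})^{K_m-1}\geq \frac{1}{L}e^{-(\beta-1)/(L-1)}$; the choice of $T_P^0$ again makes the expected per-epoch count $\Theta(1/\Delta_P^2)$, and the union bound now runs over the $K_m L\leq \beta L$ pairs, giving $4\beta L e^{-l}$. The main obstacle I anticipate is the counting stage: rigorously handling the randomness of $n$ through the layered Chernoff/Hoeffding decomposition, and verifying that the explicit constants baked into $T_{\boldsymbol{\hat{\mu}_M}}$ and $T_P^0$ are tuned so the per-epoch expected sample count clears the threshold that converts Hoeffding's bound into a clean $e^{-l}$ decay.
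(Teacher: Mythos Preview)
Your proposal is correct and follows essentially the same route as the paper's proof: a two-stage concentration argument in which a Chernoff bound controls the random sample count $\sum_t A_{k,m,s}(t)$ around its mean $l\,T_{\boldsymbol{\hat{\mu}_M}}\,\mathbb{E}[A_{k,m,s}]$, Hoeffding's inequality controls the estimation error conditional on enough samples, and a union bound over the $KM\beta\leq (M\beta)^2$ triples (using $K\leq \beta M$) assembles the final $4(M\beta)^2 e^{-l}$ bound. The paper carries out exactly these steps explicitly, including the same lower bound on $\mathbb{E}[A_{k,m,s}]$ via $\binom{K-1}{s-1}\geq 1$ and $(1-1/M)^{K-1}\geq e^{-(K-1)/(M-1)}$, and then specializes the analysis to the power phase in the same way you outline.
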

\begin{proof}
	See Appendix\ \ref{appendix2}.
\end{proof}
%
%
We now turn our attention to finding the minimum length of the exploration phase in the channel allocation stage ensuring an accurate estimate of the number of APs $\hat{K}$.
\subsection{Estimating the number of APs}
For AP $k$, $b_k^t$ found in step\ \ref{stepAlg:14} of Algorithm\ \ref{alg1} denotes the number of timeslots player $k$ was not the sole occupier of some channel $m$ until  $t$. 
\begin{lemma}\label{lemma3}
	If the length of the exploration epoch in the channel allocation step satisfies:
	\begin{equation}\label{eq:t_kChannel}
	T_{\hat{K}}=
	\left\lceil2.08\log{\left(\frac{2}{\eta}\right)}M^2e^{2\left(\frac{M\beta-1}{M-1}\right)} \right\rceil,
	\end{equation}
	then all APs have an estimate of the number of APs $\hat{K}$ satisfying $\hat{K}=K$ with probability higher than $1-\eta$, \revised{where $\eta$ is the probability of error in the estimation of the number of APs.}
\end{lemma}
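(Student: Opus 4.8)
The plan is to recognize the number-of-APs estimator of Algorithm~\ref{alg1} as the inverse of an explicit monotone map between a collision statistic and $K$, and then to control the statistical error of that statistic with a concentration inequality. First I would identify the random quantity being measured. In the exploration phase each AP draws one of the $M$ channels uniformly and independently at every timeslot, so the event that AP $k$ is the \emph{sole} occupier of its chosen channel at a given slot has probability
\begin{equation}
p=\left(1-\frac{1}{M}\right)^{K-1},
\end{equation}
independently across slots. Since $b_k^t$ counts the slots in which AP $k$ was \emph{not} alone, the fraction $\hat p=(T_{\hat K}-b_k^{T_{\hat K}})/T_{\hat K}$ is an average of $T_{\hat K}$ i.i.d.\ Bernoulli$(p)$ indicators with $\mathbb{E}[\hat p]=p$. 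Plugging the true $p$ into $\log(\hat p)/\log(1-1/M)+1$ returns $K$ exactly, so $\hat K=K$ holds if and only if $\hat p$ lands in the interval around $p$ that the rounding operation maps back to the integer $K$.

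Next I would make that interval explicit. Correct rounding requires $\log(\hat p)/\log(1-1/M)+1\in(K-\tfrac12,K+\tfrac12)$; since $\log(1-1/M)<0$, exponentiating gives the equivalent window
\begin{equation}
p\sqrt{1-\tfrac{1}{M}}<\hat p<\frac{p}{\sqrt{1-\tfrac{1}{M}}}.
\end{equation}
The binding (smaller) half-width is the lower one, $p\bigl(1-\sqrt{1-1/M}\bigr)$, and using $1-\sqrt{1-x}\ge x/2$ for $x\in[0,1]$ this is at least $p/(2M)$. Hence it suffices to guarantee the deviation bound $|\hat p-p|<\delta$ with $\delta:=p/(2M)$.

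Finally I would apply Hoeffding's inequality and uniformize over the unknown $K$. Hoeffding gives $\Pr(|\hat p-p|\ge\delta)\le 2e^{-2T_{\hat K}\delta^2}$. To remove the dependence on $K$ I bound $p$ from below in the worst case $K=\beta M$: from $\log(1-1/M)\ge-1/(M-1)$ we get
\begin{equation}
p\ge\left(1-\frac{1}{M}\right)^{\beta M-1}\ge e^{-\frac{M\beta-1}{M-1}},
\end{equation}
so $\delta^2\ge\tfrac{1}{4M^2}e^{-2(M\beta-1)/(M-1)}$. Requiring $2e^{-2T_{\hat K}\delta^2}\le\eta$ and solving for $T_{\hat K}$ yields $T_{\hat K}\ge 2M^2e^{2(M\beta-1)/(M-1)}\log(2/\eta)$, which matches the claimed length up to the constant (the slack from $2$ to $2.08$ covering the ceiling and the crude inequalities used above).

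I expect the main obstacle to be the second step: converting ``the rounding returns the integer $K$'' into a clean two-sided deviation bound on $\hat p$. The logarithm is most sensitive exactly when $p$ is smallest, i.e.\ when $K$ approaches its maximum $\beta M$, and it is this regime that forces the factor $e^{-(M\beta-1)/(M-1)}$ and, after squaring $\delta$, its square in $T_{\hat K}$; I would also need to confirm that the lower endpoint of the window is the binding one. A secondary point is that the statement asks that \emph{all} APs estimate correctly: since the per-AP estimators are identically distributed, a union bound over the $K$ APs upgrades the single-AP guarantee, at the cost of replacing $\eta$ by $\eta/K$ (an additive $\log K$ inside the logarithm), which I would either carry explicitly or absorb into the error budget $\eta$.
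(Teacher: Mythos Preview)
Your proposal is correct and follows essentially the same route as the paper's proof: identify the sole-occupancy probability, translate the rounding condition into a two-sided deviation window for the empirical frequency, lower-bound the narrower half-width via $(1-1/M)^{K-1}\ge e^{-(M\beta-1)/(M-1)}$, and apply Hoeffding. The only cosmetic differences are that the paper works with the complementary probability $1-p$ and takes the rounding margin at $\kappa=0.49$ rather than exactly $1/2$ (this is the actual source of the constant $1/(2\cdot 0.49^2)\approx 2.08$, not the slack you attribute to the ceiling); your observation about the missing union bound over APs is well taken and in fact goes slightly beyond what the paper's own argument makes explicit.
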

\begin{proof}
	See Appendix\ \ref{appendix3}.
\end{proof}
\subsection{ Length of the Channel Allocation Exploration Phase}
To ensure an accurate estimate of the channel reward means and of the number of APs, the minimum length of the exploration phase in the channel allocation solution, $T_C^0$, must satisfy the conditions in Lemma \ref{lemma2} and Lemma \ref{lemma3}. Hence, the following must hold:
\begin{equation}
\begin{aligned}
T_C^0=\max&\left\{\left\lceil\frac{2Me^{\left(\frac{K-1}{M-1}\right)}}{\Delta_M^2\left(M-1\right)^{1-\beta}}\right\rceil, \right. \left. \left\lceil 2.08\log{\left(\frac{2}{\eta}\right)}M^2e^{2\left(\frac{M\beta-1}{M-1}\right)} \right\rceil\right\}.
\end{aligned}
\end{equation}
\section{Matching Phase}\label{sec:matPhase}
The matching phase of the channel allocation solution aims at reaching a final assignment in which every AP accesses $N$ channels, such that the achieved sum reward  is maximized. 
%

The dynamics presented in section \ref{subsec:matching} and adopted in the matching phase induce a Markov chain over the state space $\mathcal{Z}=\prod_{k=1}^K\{\mathcal{A}_K\times [0,1]^{N\times 1} \times \{C,D\}\}$. Let $P^\epsilon $ denote the transision matrix of the regular perturbed Markov chain $\mathcal{Z}$. The work in \cite{pareto} guarantees that, when playing according to these dynamics, the optimal state, i.e., the one maximizing the sum rewards, is played most often. The proof relies on the theory of resistance trees for regular perturbed Markov chains \cite{evolution}. The dynamics used in this paper differ from those in \cite{pareto} in two aspects:
\begin{enumerate}
	\item	If AP $k$ receives a reward equal to $0$ on some channel $m$,  AP $k$ is discontent with probability one. In \cite{pareto}, the game is assumed to be interdependent which means that it is not possible to  partition APs into two groups that do not interact with each other. However, this property does not hold in the considered setting as shown in \cite{got}.   Therefore, as in \cite{got}, to characterize the stable states of the unperturbed chain when $\epsilon=0$, a player with 0 reward  on some channels is discontent with probability one. 
	\item For the transition probabilities between content and discontent in (\ref{cdTransition}), instead of using $\epsilon^{N-\sum\limits_{n=1}^Nu_{k,n}}$, we use $\epsilon^{u_{k,\textrm{max}}-\sum\limits_{n=1}^Nu_{k,n}}$, since the maximum utility achievable by each AP $k$ is $u_{k,\textrm{max}}$.
\end{enumerate}
Next, the recurrence states of $\mathcal{Z}$ are characterized. 
\begin{lemma}\label{lemma5}
	Let $D^0$ denote the set of states where all APs are discontent. Moreover, let $C^0$ denote all singleton states where all APs are content and their baseline actions and utilities are aligned. As proved in \cite{pareto}, the only recurrence states of $\mathcal{Z}$ are $D^0$ and all singletons in $C^0$.
\end{lemma}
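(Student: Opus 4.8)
The plan is to work with the unperturbed process $P^0=\lim_{\epsilon\to 0}P^\epsilon$, since by the theory of regular perturbed Markov chains \cite{evolution} the recurrence classes of the family $P^\epsilon$ are exactly the recurrent communication classes of $P^0$. First I would write down the $\epsilon=0$ transition rules obtained from Eq.\ (\ref{contentChoose})--(\ref{cdTransition}): a content AP replays its baseline action $\bar{\boldsymbol a}_k$ with probability one, a discontent AP randomizes uniformly over $\mathcal{A}_k$, a content AP stays content only when its realized action and utility coincide with $(\bar{\boldsymbol a}_k,\bar{\boldsymbol u}_k)$, and the only content/discontent moves that survive the limit are those whose probability in Eq.\ (\ref{cdTransition}) reduces to $\epsilon^0$. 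The goal is then to show that the recurrent communication classes of this reduced chain are precisely the singletons in $C^0$ and the single class $D^0$, with every remaining state transient.

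For the content singletons I would argue that each state of $C^0$ is absorbing under $P^0$: if all APs are content with aligned baselines, then by Eq.\ (\ref{contentChoose}) every AP deterministically replays $\bar{\boldsymbol a}_k$, the joint action profile is frozen, so each AP again realizes exactly $\bar{\boldsymbol u}_k$ and remains content through Eq.\ (\ref{contentTransition}). An absorbing state is trivially a singleton recurrence class.

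For $D^0$ I would establish the two defining properties of a recurrence class. \emph{Communication}: starting from any all-discontent state, uniform randomization realizes every joint action profile with positive probability, and by selecting profiles in which no AP attains its maximal utility $u_{k,\mathrm{max}}$ (invoking the zero-reward rule of Eq.\ (\ref{zeroTransition}) whenever some $u_k(m)=0$), all APs remain discontent while their baselines are reset to the realized action--utility pairs; chaining such steps connects any two states of $D^0$. \emph{Closedness}: the only way to leave $D^0$ is for some discontent AP to turn content, which by Eq.\ (\ref{cdTransition}) happens with probability $\epsilon^{\,u_{k,\mathrm{max}}-\sum_{n}u_{k,n}}$, a strictly positive power of $\epsilon$ off the maximal-utility event and therefore carrying no mass in $P^0$; hence the unperturbed chain never exits $D^0$.

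Finally I would show every other state is transient by exhibiting a positive-probability $P^0$-path into $D^0$: in any state containing a misaligned content AP or a mix of moods, a discontent AP's randomization perturbs the interference pattern so that some content AP no longer receives its baseline utility (or receives a zero on one of its channels), forcing it discontent via Eq.\ (\ref{cdTransition}) or Eq.\ (\ref{zeroTransition}); iterating drives the system to an all-discontent configuration, and since $D^0$ is closed the original state cannot recur. The main obstacle is precisely this transience/closedness bookkeeping under the two departures from \cite{pareto}: the zero-reward-implies-discontent rule of Eq.\ (\ref{zeroTransition}), needed because the game here is not interdependent as noted in \cite{got}, and the replacement of the exponent by $u_{k,\mathrm{max}}-\sum_{n}u_{k,n}$ in Eq.\ (\ref{cdTransition}). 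I would have to verify that these modifications neither create spurious recurrent classes nor break the closedness of $D^0$, so that the characterization of \cite{pareto} carries over to the present dynamics.
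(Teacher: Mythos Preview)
The paper gives no proof of this lemma; the statement itself defers to \cite{pareto}. Your outline is the natural reconstruction of that argument and follows the same route the paper implicitly relies on.

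One step needs tightening. In arguing that $D^0$ is closed under $P^0$ you note that the content-transition probability $\epsilon^{u_{k,\mathrm{max}}-\sum_n u_{k,n}}$ vanishes ``off the maximal-utility event'' and then conclude that the unperturbed chain never leaves $D^0$. But on the maximal-utility event the exponent is zero and the transition probability equals one even at $\epsilon=0$: when all discontent APs randomize uniformly there is positive probability that some AP $k$ realizes exactly $u_{k,\mathrm{max}}$ and is forced content by Eq.~(\ref{cdTransition}). So $D^0$ is not literally absorbing, and the inference ``hence the unperturbed chain never exits $D^0$'' does not follow from what you wrote. What is actually needed is the stronger claim that every such excursion returns to $D^0$ with probability one under $P^0$---the remaining discontent APs' randomization eventually perturbs the newly content AP's utility away from its baseline, sending it back to discontent via Eq.~(\ref{cdTransition}) with a strictly positive exponent---together with an argument that no $P^0$-path from $D^0$ ever lands in a $C^0$ singleton (which amounts to checking that not all APs can simultaneously realize their individual $u_{k,\mathrm{max}}$, or treating that degenerate case separately). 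You rightly flag the closedness bookkeeping as the main obstacle at the end of your proposal; just be aware that the gap sits precisely in the sentence you wrote, not only in the two modifications to \cite{pareto} that you list.
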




The resistance of moving from one recurrence state to the other being similar to \cite{pareto}, the stochastic potential of any state $z\in C^0$ is of the form:
\begin{equation}
\zeta(z)=c[|C^0|-1]+\sum\limits_{k=1}^K u_{k,\text{max}}-\sum\limits_{m=1}^Ma_k(m)\hat{\mu}(k,m,k_m).
\end{equation}
 From Theorem 1 of \cite{pareto}, the stable state is the one minimizing the stochastic potential, hence the one maximizing the achieved sum reward.  This stable state is guaranteed to be played the majority of times for a small enough perturbation $\epsilon$ \cite{got}, \cite{pareto}. In the exploitation phase, as the state that was most played and that resulted most in the players being content is played, the stable state is hence expected to be played with high probability. Next, the probability of error in the matching epoch $l$ is found.

Let $\pi$ denote the stationary distribution of the Markov chain $\mathcal{Z}$ and let $\boldsymbol{z^*}=[\boldsymbol{\bar{a}^*}, \boldsymbol{\bar{u}^*}, C^K]$ denote the optimal state. According to \cite{got}, $\pi(\boldsymbol{z^*})>1/2$ for a small enough perturbation $\epsilon$. The following lemma finds the probability of error in the matching phase of the $l^\text{th}$ epoch, $\delta_{m,l}$.  
\begin{lemma}\label{lemma6}Let $\boldsymbol{a}^{(l)}$ denote the action that was most played in some epoch $l$. As proved in \cite{magesh}, the probability of error in the matching phase in epoch $l$, $\delta_{m,l}$, is upper bounded by:
	\begin{equation}
	\delta_{m,l}=\text{Pr}(\boldsymbol{a^*}\neq \boldsymbol{a}^{(l)})\leq A_0\norm{\phi}_{\pi}\exp\left(\frac{-\theta^2\pi(\boldsymbol{z^*})c_2l^{1+\delta} }{72T_m(1/8)}\right),
	\end{equation}   
	where $A_0$ is a constant, $\phi_{\pi}$ is the probability distribution of the initial state played in epoch $l$ and $T_m(1/8)$ is the mixing time of the Markov chain $\mathcal{Z}$ with an accuracy of 1/8 \cite{chernoff}. 
\end{lemma}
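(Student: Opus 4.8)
The plan is to reduce the claim to a concentration statement for the empirical fraction of time the perturbed chain $\mathcal{Z}$ spends in the optimal recurrence state $\boldsymbol{z^*}$ over the $n=c_2 l^{1+\delta}$ frames of the matching phase, and then to invoke the Chernoff-type bound for Markov chains of \cite{chernoff}. The starting point is the fact, established through the resistance-tree and stochastic-potential analysis preceding the lemma, that $\boldsymbol{z^*}$ minimizes the stochastic potential and hence carries stationary mass $\pi(\boldsymbol{z^*})>1/2$ for a sufficiently small perturbation $\epsilon$ (as in \cite{got}). Since the exploitation phase plays the action $\boldsymbol{a_k^{l*}}=\argmax_{\boldsymbol{a}_k} F_k^l(\boldsymbol{a}_k)$, i.e.\ the content-most-played action, the key observation is that the most-played action coincides with $\boldsymbol{a^*}$ whenever the chain occupies $\boldsymbol{z^*}$ for strictly more than half of the matching frames.

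First I would define the indicator $f(z)=\mathbb{I}(z=\boldsymbol{z^*})$ on the state space $\mathcal{Z}$, so that $\mathbb{E}_{\pi}[f]=\pi(\boldsymbol{z^*})$, and write the empirical average $\hat{f}_n=\frac{1}{n}\sum_{t=1}^n f(Z_t)$. Next I would argue a containment of events: the error event $\{\boldsymbol{a}^{(l)}\neq\boldsymbol{a^*}\}$ is contained in $\{\hat{f}_n\leq 1/2\}$, because in state $\boldsymbol{z^*}$ every AP is content and plays its baseline action $\boldsymbol{\bar{a}^*}=\boldsymbol{a^*}$; thus if $\boldsymbol{z^*}$ is visited a strict majority of the $n$ frames, the counter $F_k^l(\boldsymbol{a^*})$ exceeds $n/2$ while every other action accumulates fewer than $n/2$ content-plays, so $\boldsymbol{a^*}$ is the unique maximizer and no error occurs. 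Since $\pi(\boldsymbol{z^*})>1/2$, writing $1/2=(1-\theta)\pi(\boldsymbol{z^*})$ for a deviation fraction $\theta\in(0,1)$ recasts the error probability as the lower-deviation probability $\text{Pr}\!\left(\hat{f}_n\leq(1-\theta)\pi(\boldsymbol{z^*})\right)$. Finally I would apply the Markov-chain Chernoff bound of \cite{chernoff} to $f$, which for a chain started from initial distribution $\phi$ controls this deviation by $A_0\norm{\phi}_{\pi}\exp\!\left(-\theta^2\pi(\boldsymbol{z^*})n/(72\,T_m(1/8))\right)$; substituting $n=c_2 l^{1+\delta}$ delivers the stated bound.

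The main obstacle, and the step for which the cited reference does the real work, is verifying the hypotheses of the Markov-chain Chernoff inequality and producing the exact constants: the chain $\mathcal{Z}$ is the regular perturbed chain with transition matrix $P^\epsilon$, so one must confirm it is ergodic with a well-defined stationary $\pi$, control its mixing time $T_m(1/8)$, and absorb the non-stationary start of epoch $l$ through the $\pi$-norm factor $\norm{\phi}_{\pi}$ of the initial distribution. One must also check that the deviation fraction $\theta$ obtained from $1/2=(1-\theta)\pi(\boldsymbol{z^*})$ stays bounded away from zero uniformly in $l$, which holds because $\pi(\boldsymbol{z^*})$ exceeds $1/2$ by a fixed margin for fixed $\epsilon$; this is what makes the exponent genuinely negative and lets the per-epoch error decay like $\exp\!\left(-\Theta(l^{1+\delta})\right)$, consistent with the constant $A_1$ used in the exploitation-phase regret bound $R_C^3$.
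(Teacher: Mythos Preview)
The paper does not supply its own proof of this lemma: the statement is asserted with the attribution ``As proved in \cite{magesh}'' and no argument follows, so there is nothing paper-side to compare against beyond the cited reference. Your proposal is exactly the route that underlies the result in \cite{magesh} and \cite{got}: reduce the error event to the chain spending at most half of the matching frames in $\boldsymbol{z^*}$, use $\pi(\boldsymbol{z^*})>1/2$ to cast this as a lower deviation $\hat{f}_n\le(1-\theta)\pi(\boldsymbol{z^*})$, and apply the Markov-chain Chernoff bound of \cite{chernoff} with the non-stationary start absorbed by $\norm{\phi}_{\pi}$. This is correct and matches the intended argument.

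One small point worth tightening: the containment $\{\boldsymbol{a}^{(l)}\neq\boldsymbol{a^*}\}\subseteq\{\hat f_n\le 1/2\}$ is not literally exact, because the counter $F_k^l(\cdot)$ tracks the \emph{played} action $\boldsymbol{a}_k^t$ rather than the baseline $\boldsymbol{\bar a}_k$, and a content AP in state $\boldsymbol{z^*}$ still experiments with probability $\epsilon^c$ (cf.\ Eq.~(\ref{contentChoose})). Over $n$ frames this introduces at most an $O(n\epsilon^c)$ discrepancy between occupation of $\boldsymbol{z^*}$ and the count $F_k^l(\boldsymbol{a^*}_k)$, which is harmless for fixed small $\epsilon$ and is absorbed into the constant $A_0$ (or handled by taking $\theta$ slightly smaller). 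This is how the cited works deal with it as well, so it does not constitute a gap in your outline.
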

The analysis of the matching phase of the power allocation solution is similar to the one given above and is omitted for space constraints.
\section{Simulation Results}\label{sec:results}
Extensive simulations of the proposed algorithm were conducted to validate its performance. The following simulation parameters were chosen: $K=4, M=4, N=\beta=L=2, B_c= 2.5 ~\textrm{MHz}, c_1=3000, c_2=5000, \epsilon = 5 \times 10^{-5}, \gamma=0$. The available SINR values are $\boldsymbol\Gamma=\{24, 4.77\}\, \text{(dB)}$ leading to achieved rates of 20 and 5 Mbps respectively. For the channel allocation stage, the parameter $c$ used in the matching phase (Cf. Section \ref{subsec:matching}) is set as: $ c = KN$, whereas for the power allocation stage $c= K_m$ for each channel $m\in \mathcal{M}$. {Two of the APs are assumed to have a power budget of 1W per channel, while the remaining two have a power budget of 2W per channel.} Additional simulation parameters are given in Table\ \ref{tab1} \cite{3gpp}.
{\renewcommand{\arraystretch}{1.1}
	\begin{table}[!htb]
		\small
		\caption{Simulation parameters.}
		\centering
		\vspace{-0.2cm}
		\begin{tabular}{|c|c|}
			\hline Cell Radius $R_d$ & 150 m\\
			\hline Overall Transmission Bandwidth & 10 MHz\\
			\hline Number of channels & 4\\
			\hline Number of APs & 4\\
			\hline Power Budget per AP & \multirow{2}{*}{$\{1,1,2,2\}$ (W)}\\
			per channel $P_{(.)}^m$ &\\
			\hline Available SINR Requirements & $\boldsymbol{\Gamma}=\{24, 4.77\} \text{(dB)}$  \\
			\hline  \multirow{2}{*}{Distance Dependent Path Loss} & $128.1 + 37.6 \log_{10}(d) \text{(dB)},$ \\
			&$d \:\text{in Km}$ \\
			\hline {Receiver Noise Density} &  {$4.10^{-18}$ mW/Hz}\\
			\hline		
		\end{tabular}\label{tab1}\vspace{-0.2cm}
\end{table}}
\normalsize
\vspace{-0.4cm}
\subsection{Estimation Accuracy of the Exploration Phase}\label{subsec:estimation}
\vspace{-0.6cm}
\begin{figure}[!h]
	\begin{center}
		\begin{subfigure}{.33\columnwidth}
			\centering
			\includegraphics[height=4.5cm,keepaspectratio]{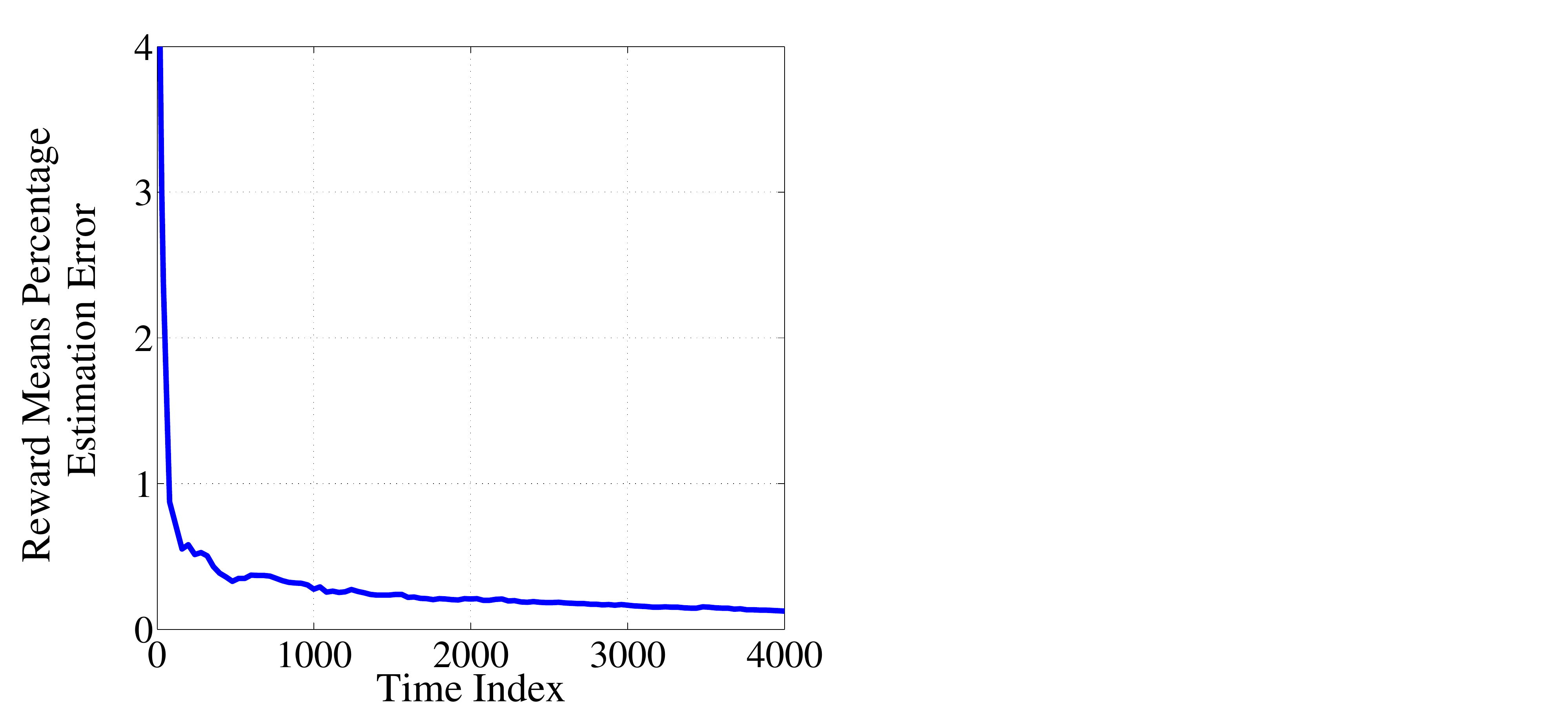}  
			\caption{\label{fig:fig0a} }
		\end{subfigure}%
		\begin{subfigure}{0.33\columnwidth}
			\centering
			\includegraphics[height=4.5cm,keepaspectratio]{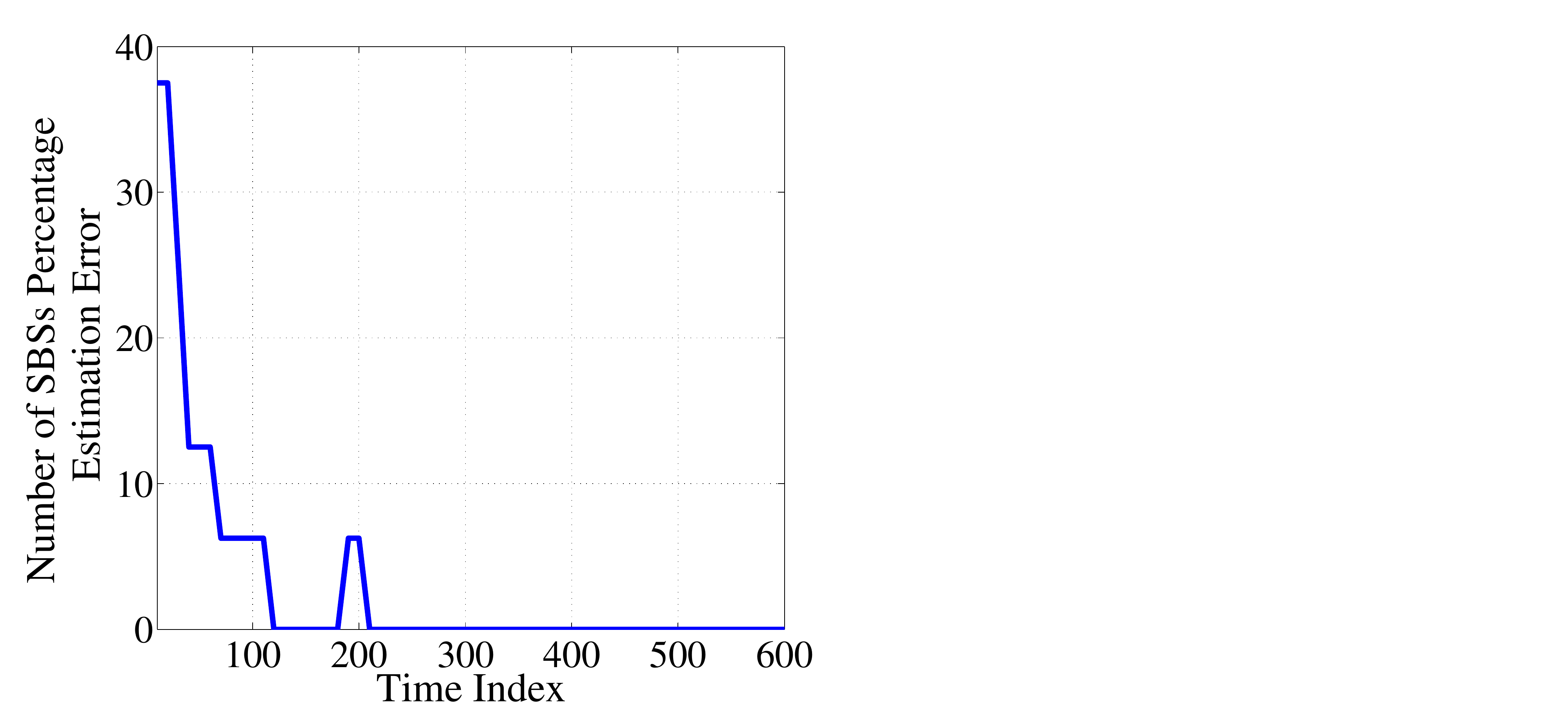}  
			\caption{\label{fig:fig0b} }
		\end{subfigure}
		\begin{subfigure}{0.33\columnwidth}
		\centering
		\includegraphics[height=4.5cm,keepaspectratio]{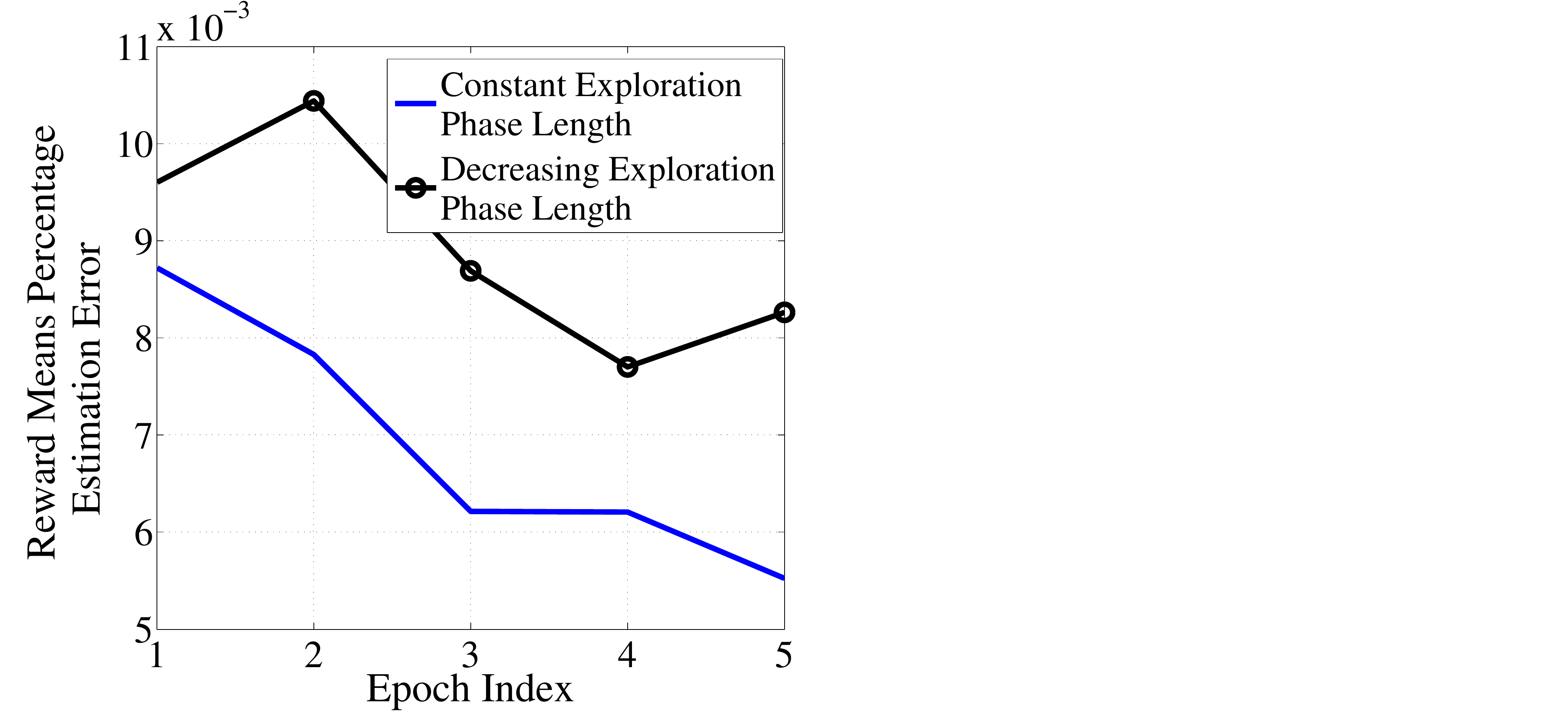}  
		\caption{\label{fig:fig0c} }
	\end{subfigure}
\vspace{-0.2cm}
		\caption{\label{fig:fig0} {Estimation error as time progresses in the channel allocation stage for (a) the estimation of the rewards, (b) the estimation of the number of APs. (c) Comparison of the estimation error as a function of the epoch index in the channel allocation stage for  the estimation of the rewards.}}
	\end{center}
\end{figure}
\vspace{-0.8cm}
First, we evaluate the estimation accuracy of the exploration phase in the channel allocation stage. As shown in Fig.\ \ref{fig:fig0a} and Fig.\ \ref{fig:fig0b}, the estimation of both the reward means and the total number of APs converges rather quickly to the correct values. 
Having observed that the estimation of the exploration phase converges quickly, a version of the proposed algorithm where the exploration phase length is divided by the epoch index  was tested. The estimation error of this version with a decreasing exploration phase length was compared against the version with a constant exploration phase length. Fig.\ \ref{fig:fig0c} plots the channel rewards estimation error for both versions. Although the constant length version outperforms the version with a decreasing exploration phase length, the estimation error achieved by both methods is lower than $1.1\times 10^{-2} \%$, hence negligible.  When it comes to the number of APs estimation, both versions accurately estimate $\hat{K}$, without error, when convergence is reached.

For the power allocation stage, the power level rewards estimation also converges quickly to a negligible error value.
\vspace{-0.4cm}
\subsection{Performance Analysis}
\vspace{-0.6cm}
\begin{figure}[!h]
	\begin{center}
		\begin{subfigure}{.33\columnwidth}
			\centering
	 \includegraphics[height=4.5cm,keepaspectratio]{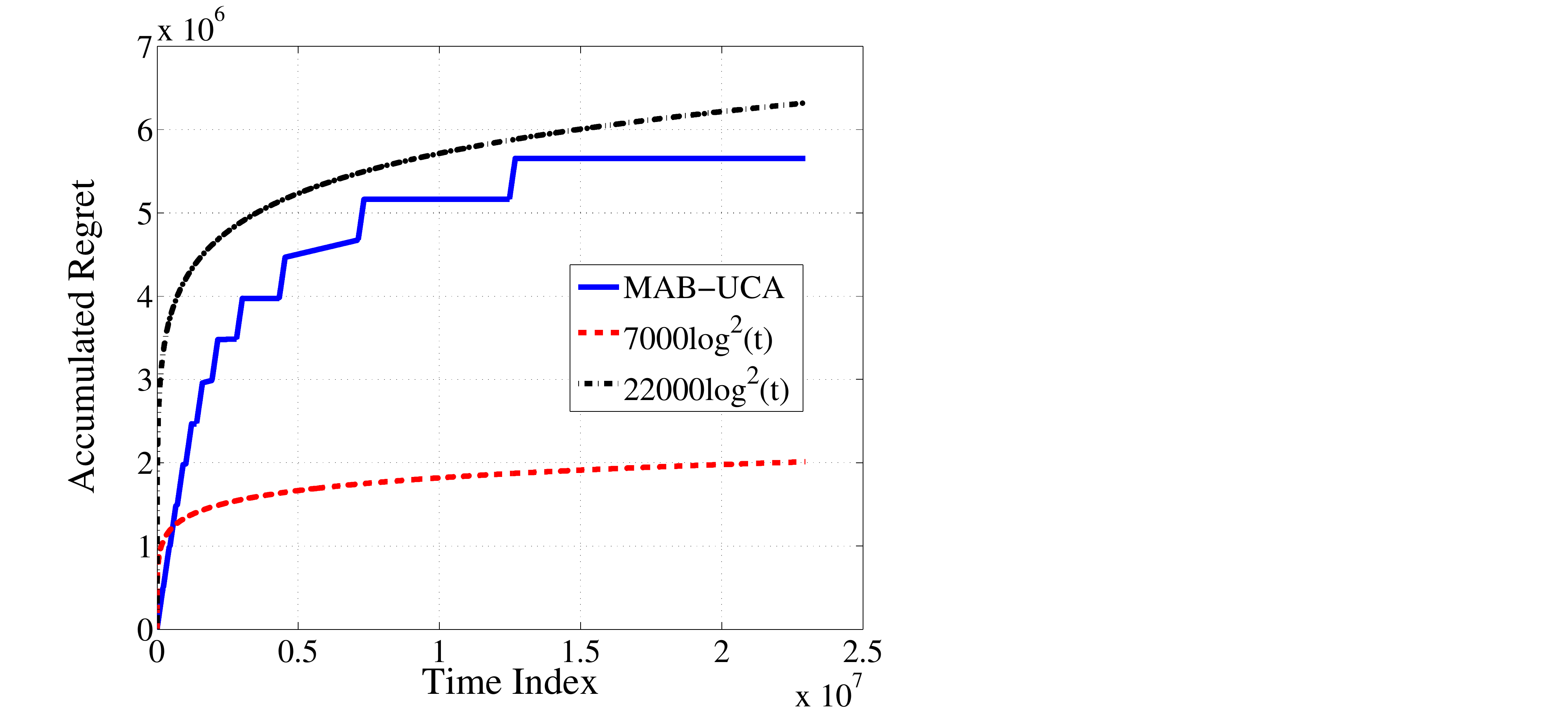}  
			\caption{\label{fig:fig1a} }
		\end{subfigure}%
		\begin{subfigure}{0.33\columnwidth}
			\centering
		\includegraphics[height=4.5cm,keepaspectratio]{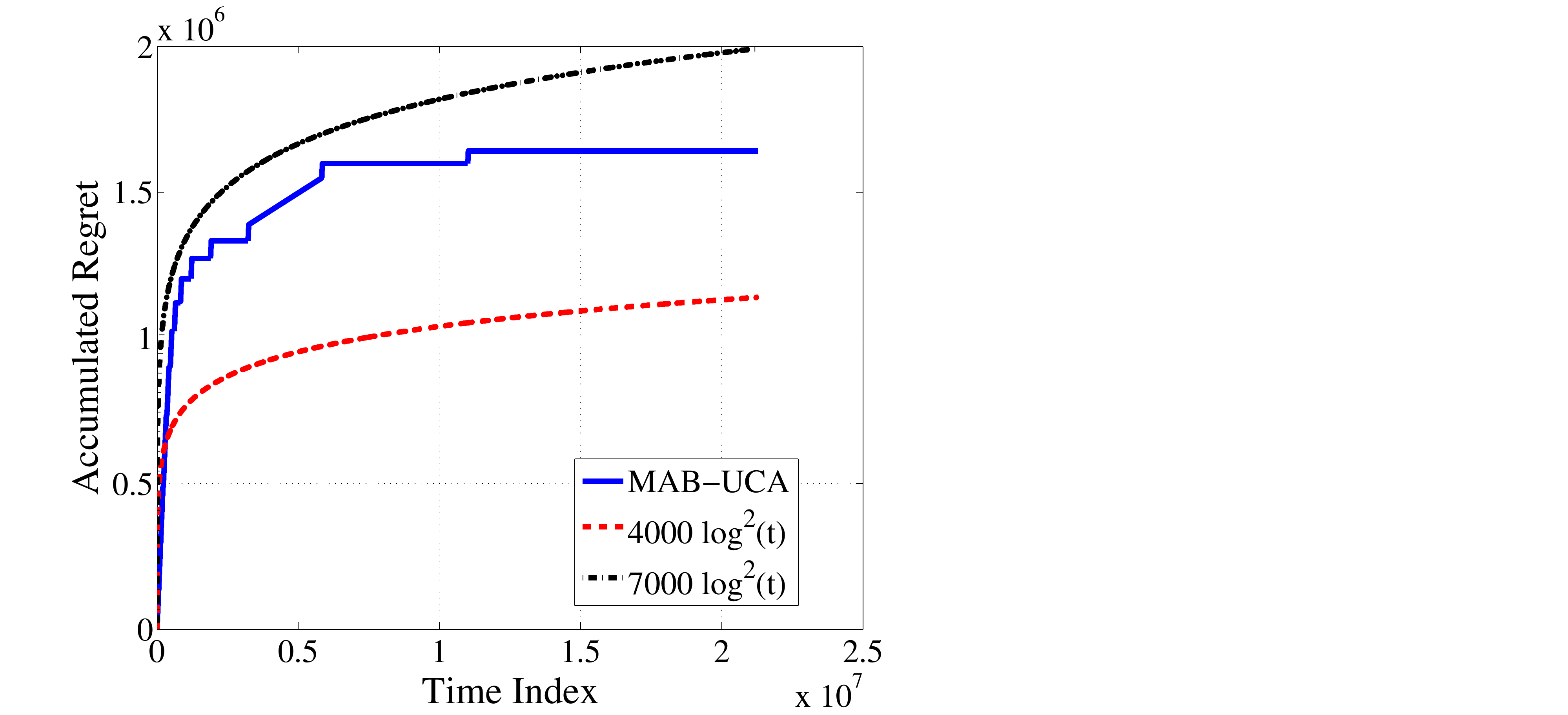}  
			\caption{\label{fig:fig1b} }
		\end{subfigure}
		\begin{subfigure}{0.33\columnwidth}
		\centering
		\includegraphics[height=4.5cm,keepaspectratio]{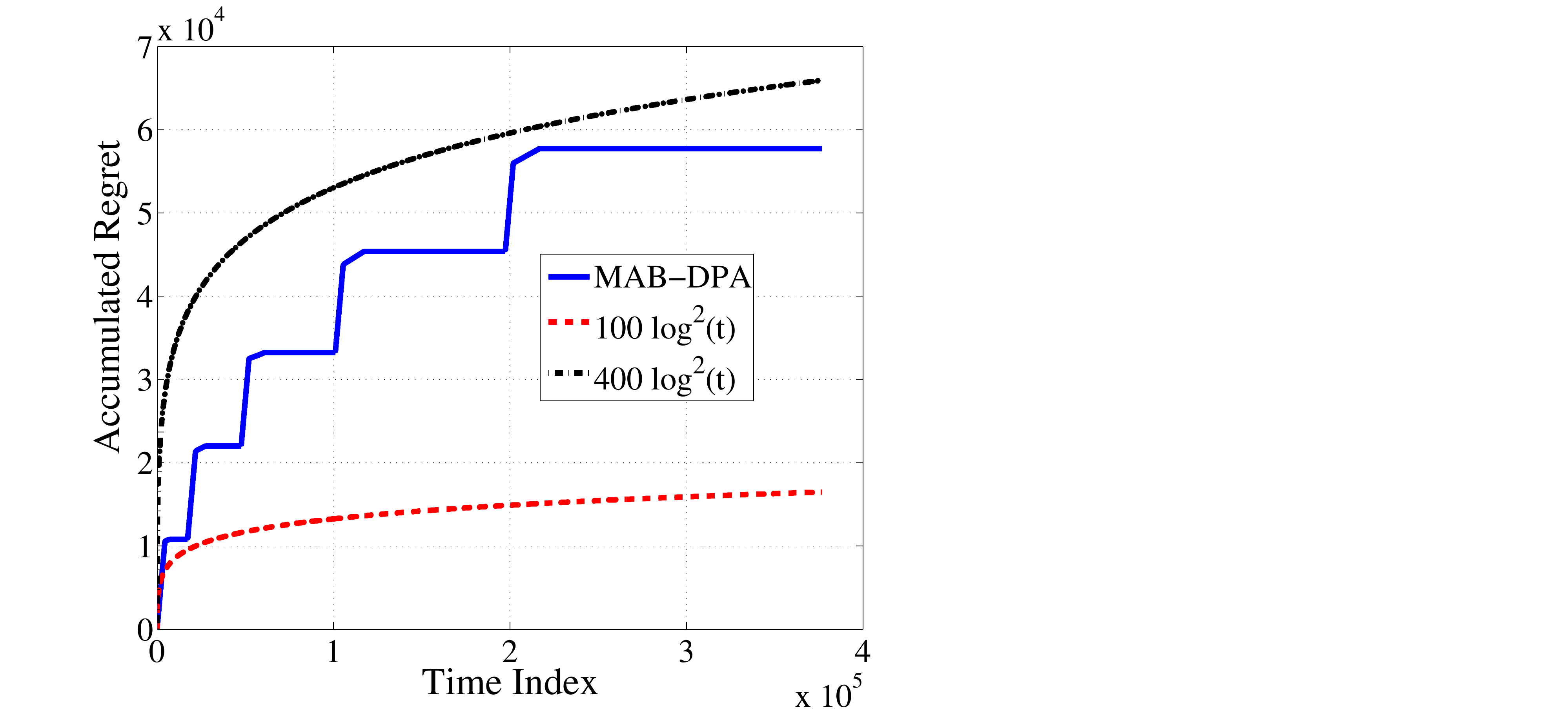}  
		\caption{\label{fig:fig1c} }
	\end{subfigure}
\vspace{-0.2cm}
		\caption{\label{fig:fig1} {Accumulated regret as time progresses  (a) for the channel allocation phase with a constant exploration phase length, (b) for the channel allocation phase with a decreasing exploration phase length, (c) for the power allocation stage.}}
	\end{center}
\end{figure}

\revised{Fig. \ref{fig:fig1} shows the average accumulated regret as a function of time in the channel allocation stage for both the constant and the decreasing length exploration phase versions. The results show that the average accumulated regret for both versions increases with time as $\mathcal{O}(\log(t)^2)$. More specifically, the regret incurred for the constant length exploration phase version  is bounded between $7000 \log(t)^2$ and $22000\log(t)^2$, as shown in Fig.  \ref{fig:fig1a}. The regret incurred for the decreasing length exploration phase version is  bounded between $4000 \log(t)^2$ and $7000\log(t)^2$.  In fact, most of the regret is accumulated during the exploration phase where APs choose a channel uniformly at random. Hence, decreasing the length of the exploration phase lowers the value of the accumulated regret as shown in Fig.\ \ref{fig:fig1b}, without jeopardizing the estimation accuracy as was shown in Section \ref{subsec:estimation}.}

\revised{The regret incurred on all channels during the power allocation stage  is bounded between $100 \log(t)^2$ and $400\log(t)^2$, as shown in Fig.  \ref{fig:fig1c}. The lower regret observed during the power allocation stage, when compared to the channel allocation stage, results from the smaller number of APs competing  for a smaller number of arms.  In fact, on each channel $m \in \mathcal{M}$ during the power allocation stage, the number of competing APs  is $K_m\leq \beta=2$, while the number of arms or power levels is $L=2$. In contrast, during the channel allocation stage, the number of players is $K=4$ with $\binom{M}{N}=6$ available arms. }
\revised{magenta}{\begin{remark}
 To provide insight on the accumulated regret as a function of time in seconds, and the time duration needed to reach convergence, assume that a subcarrier spacing of 240 KHz \cite{etsi} is considered, resulting in a timeslot duration equal to 62.5 $\mu$s. For the uncoordinated channel access part of the solution, convergence to the optimal allocation is first reached at the fourth epoch, which takes place from $0.45\times 10^6$ to $0.6\times 10^6$ timeslots approximately. In terms of time duration in seconds, convergence is reached in $0.45\times 10^6\times 62.5 \times 10^{-6}=28.125$ seconds. For the uncoordinated power control part, convergence is reached from the first epoch,  i.e., at around $0.1 \times 10^5$   timeslots, or 0.625 seconds with a timeslot duration of 62.5 $\mu$s.
\end{remark}}
In Fig.\ \ref{fig:fig2}, we compare the performance of the proposed method with a technique based on the UCB algorithm proposed in \cite{8902878} and similar to the one proposed in \cite{8676344}, denoted by Two-Dimensional UCB. In the Two-Dimensional UCB method, channel and power allocation are conducted at the same time, using the UCB algorithm, by considering all possible combinations of the channels and the power levels. For the considered setting, the number of arms in the Two-Dimensional UCB method is hence $\binom{M}{N}\times L^N=24$ arms.    

\begin{figure}[!h]
	\begin{center}
		\begin{subfigure}{.33\columnwidth}
	\centering
		\includegraphics[height=4.5cm,keepaspectratio]{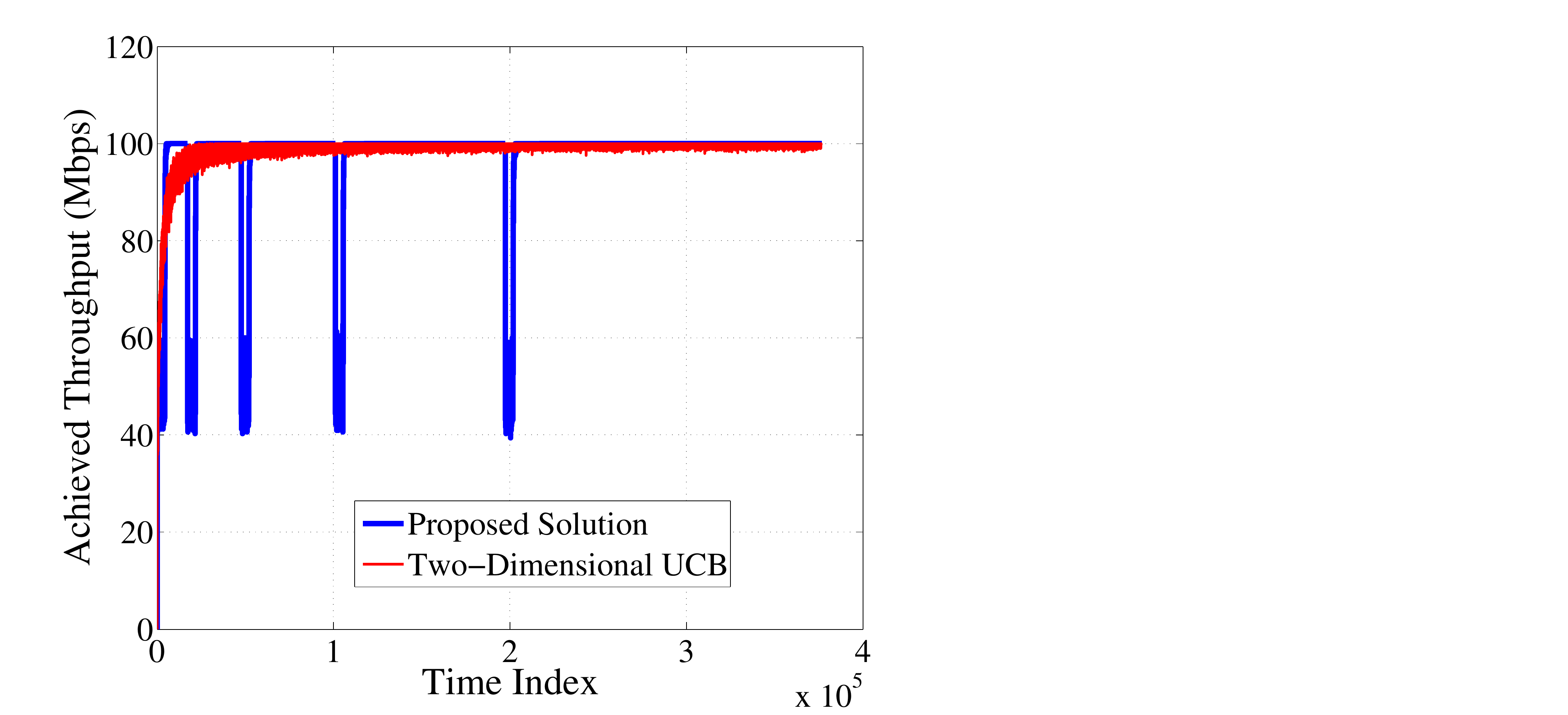}  
			\caption{\label{fig:fig2a} }
		\end{subfigure}%
		\begin{subfigure}{0.33\columnwidth}
			\centering
		\includegraphics[height=4.5cm,keepaspectratio]{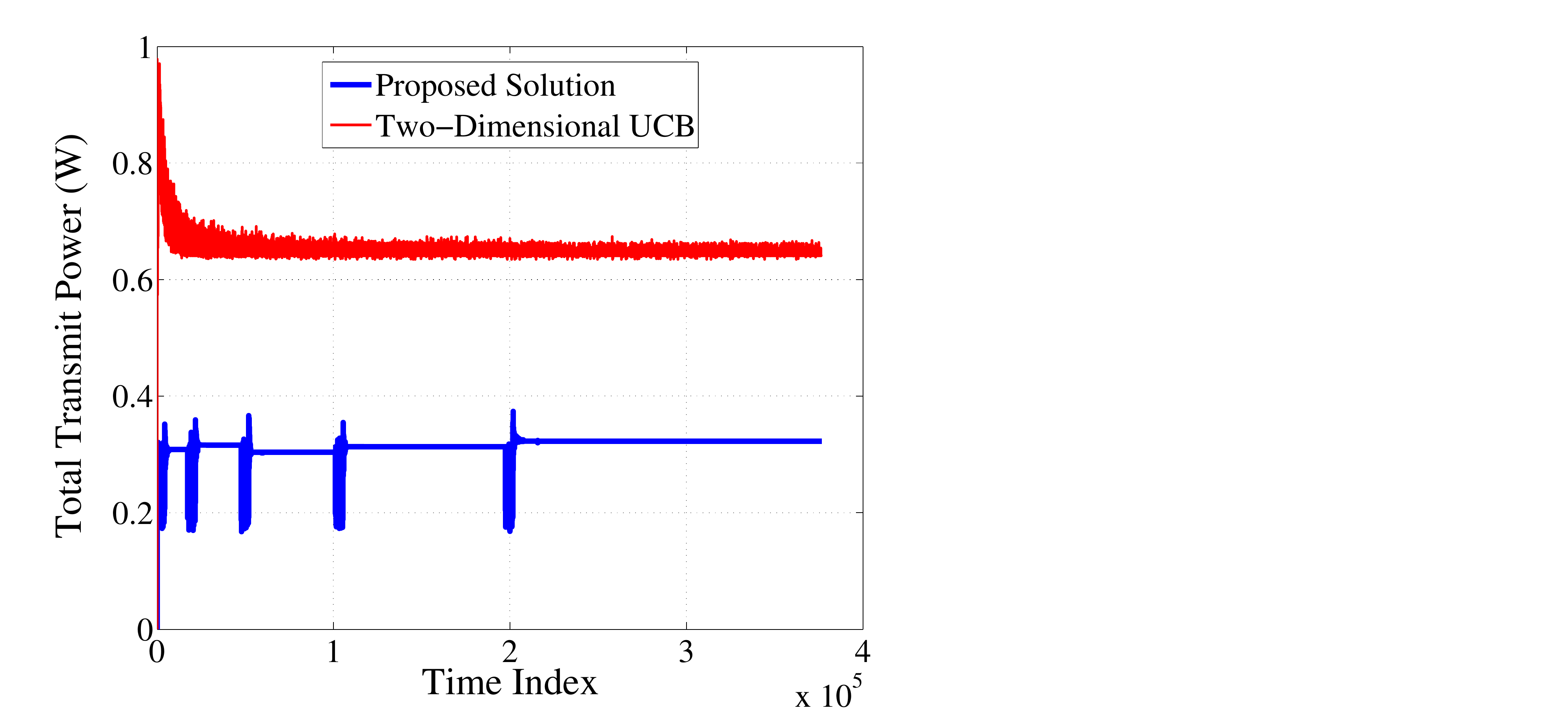}  
			\caption{\label{fig:fig2b} }
		\end{subfigure}
		\begin{subfigure}{0.33\columnwidth}
		\centering
		\includegraphics[height=4.5cm,keepaspectratio]{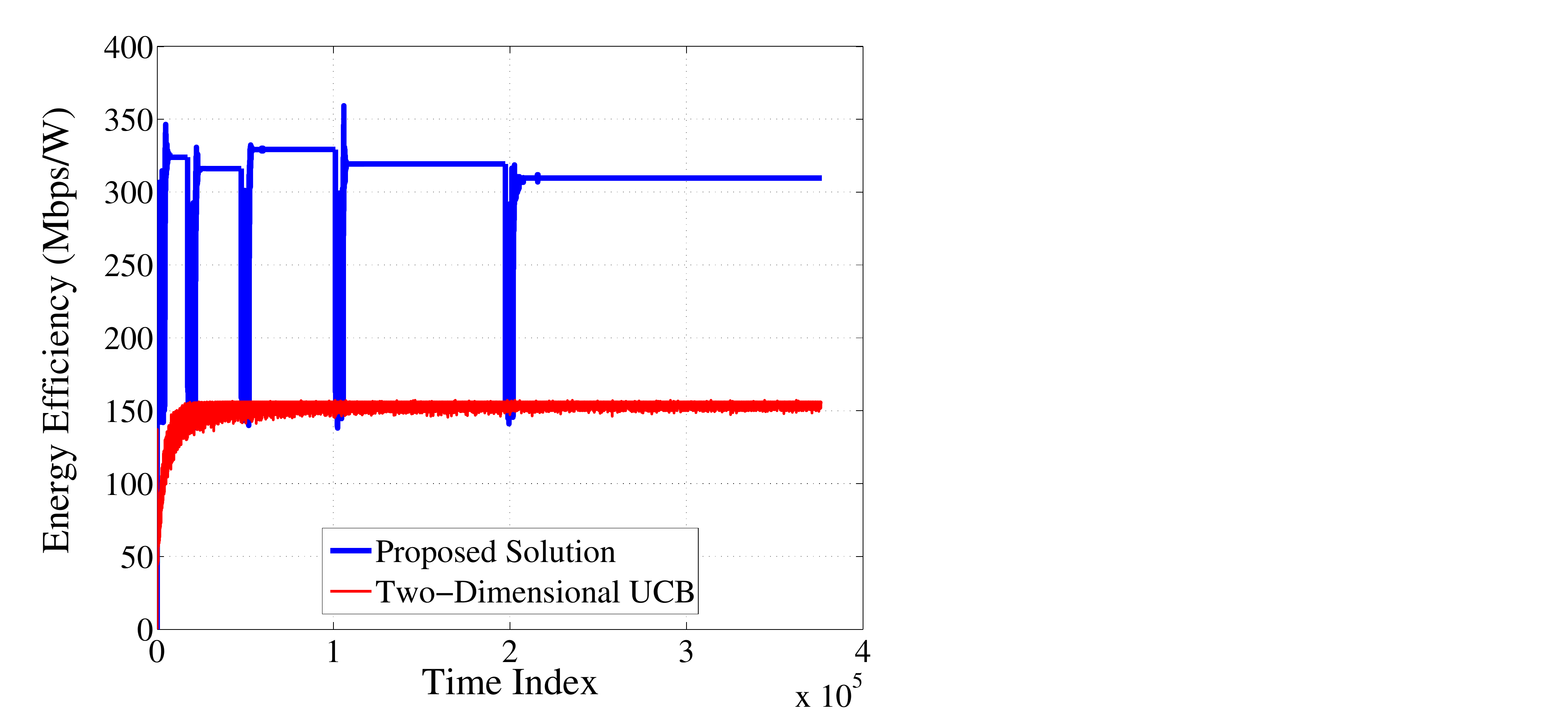}  
		\caption{\label{fig:fig2c} }
	\end{subfigure}
		\vspace{-0.2cm}
		\caption{\label{fig:fig2} { Performance comparison as a function of time of (a) the achieved rate, (b) the total transmit power, (c) energy efficiency.}}
	\end{center}
\end{figure}
\vspace{-0cm}
\revised{In Fig.\ \ref{fig:fig2a}, the achieved rate is plotted as a function of time. Both methods converge relatively quickly to the highest achievable rate, with small variations for the Two-Dimensional UCB technique.  The sharp falls in the achieved rate of the proposed method are due to the exploration phase during each epoch of the power allocation stage where APs choose the power levels uniformly at random, {causing collisions and leading to zero rates}.}

\revised{The total transmit power used by the APs as a function of time is shown in Fig.\ \ref{fig:fig2b}. While both methods converge to the same highest achievable rate, the  power used by our proposed method is significantly lower than the one needed by the Two-Dimensional UCB method. \revised{This means that the UCB-based method does not lead APs to learn the optimal allocation and converges to a sub-optimal resource partitioning among the APs. In other words, our proposed method achieves a better allocation for the channel and power when compared to the UCB-based method.} Moreover, our proposed method has performance guarantees in terms of regret and optimality, while the Two-Dimensional UCB method  \cite{8902878} does not.}

\revised{To check the combined effect of rate and power on the performance of the compared methods, the achieved energy efficiency (EE), which is the ratio of the achieved rate to the used power,  is plotted in Fig.\ \ref{fig:fig2c}. Once again,   the sharp falls in the performance of our proposed method are due to the exploration phase in each epoch of the power allocation stage. \revised{Fig.\ \ref{fig:fig2c} shows that our proposed method greatly outperforms the UCB-based method, by achieving more than a twofold increase in the EE. This is due to our method converging to the optimal allocation when the UCB-based technique converges to a sub-optimal allocation requiring more transmit power as shown by Fig.\ \ref{fig:fig2b}.}}
\section{Conclusion}\label{sec:conc}
In this paper, the uncoordinated channel and power allocation problems in a SON were studied. The considered framework allows each AP to choose $N$ channels at each timeslot, and allows each channel to simultaneously accommodate multiple APs in a NOMA manner. The considered problem  was modeled using the  multi-player MAB  framework, with varying user rewards, multiple plays, and non-zero reward on collision. A game-theoretic approach was used to develop an algorithm with a sub-linear regret  of $\mathcal{O}(\log^2T)$.  Simulation results validated the sub-linear regret of the proposed method and showed its superior performance, when compared with one of the most used algorithms in the MAB literature.
\section*{Acknowledgment}
The authors would like to thank Akshayaa Magesh for useful discussions regarding multi-player multi-armed bandits.
\begin{appendices}
	\section{Proof of Lemma \ref{lemma1}}\label{appendix1}
	In the channel allocation phase, denote by $\boldsymbol{a}^{(1)}$ the optimal assignment, and by $J_M^1$ the sum rewards achieved when $\boldsymbol{a}^{(1)}$ is played, which is then given by:
	\begin{equation}
	J_M^1=\sum_{k=1}^K\sum_{m=1}^M a_{k}^{(1)}(m)\mu_M(k,m,k^*_m).
	\end{equation}
	Furthermore, denote the second best assignment and the sum reward achieved under it by $\boldsymbol{a}^{(2)}$ and $J_M^2$ respectively.   Let the estimated mean of AP $k$ over channel $m$ with $k_m$ APs on channel $m$ be written as: 
	\begin{equation}
	\hat{\mu}_M(k,m,k_m)=\mu_M(k,m,k_m)+z(k,m,k_m),
	\end{equation}
	where $z(k,m,k_m)$ is the estimation inaccuracy during the channel allocation phase satisfying $|z(k,m,k_m)|\leq \Delta_M$. 
	The sum reward achieved when $\boldsymbol{a}^{(1)}$ is played with the estimated channel means satisfies: 
	\begin{equation}
	\begin{aligned}
	\sum_{k=1}^K\sum_{m=1}^M a_{k}^{(1)}(m)\hat{\mu}_M(k,m,k_m)=& \sum_{k=1}^K\sum_{m=1}^M a_{k}^{(1)}(m)(\mu_M(k,m,k_m)+z(k,m,k_m))> \\
	&\sum_{k=1}^K\sum_{m=1}^M a_{k}^{(1)}(m)\mu_M(k,m,k_m)- K N \Delta_M.
	\end{aligned}
	\end{equation}
	Any other assignment $\boldsymbol{a}\neq\boldsymbol{a}^{(1)}\neq \boldsymbol{a}^{(2)}$ must perform at most as well as $\boldsymbol{a}^{(2)}$:
	\begin{equation}
	\begin{aligned}
	\sum_{k=1}^K\sum_{m=1}^M a_{k}(m)\hat{\mu}_M(k,m,k_m)=&\sum_{k=1}^K\sum_{m=1}^M a_{k}(m)(\mu_M(k,m,k_m)+z(k,m,k_m))< \\
	&\sum_{k=1}^K\sum_{m=1}^M a_{k}^{(2)}(m)\mu_M(k,m,k_m)+ K N \Delta_M. 
	\end{aligned}
	\end{equation}
	To avoid changing the optimal assignment because of the estimation inaccuracy, the following  must hold $ \forall \boldsymbol{a}\neq \boldsymbol{a}^{(1)}$:
	\begin{equation}\label{eq:conditiona}
	\begin{aligned}
	\sum_{k=1}^K\sum_{m=1}^M a_{k}^{(1)}(m)\hat{\mu}_M(k,m,k_m)>\sum_{k=1}^K\sum_{m=1}^M a_{k}(m)\hat{\mu}_M(k,m,k_m).
	\end{aligned}
	\end{equation}
	To ensure Eq.\ (\ref{eq:conditiona}), we need to have: $J_M^1 -  KN\Delta_M>J_M^2+ KN\Delta_M$, which holds if:
	\begin{equation}
	\Delta_M<\frac{J_M^1-J_M^2}{2KN}.
	\end{equation}
	In the power allocation phase, following a similar approach over each channel $m$, we get:
	\begin{equation}
	\Delta_P<\frac{J_P^1-J_P^2}{2K_m}.
	\end{equation}
	\section{Proof of Lemma\ \ref{lemma2}}\label{appendix2}
	\subsection{Lower Bound of the Length of the Exploration Phase in the Channel Allocation Step}\label{appendix2:channel}
	To find a lower bound of the length of the exploration phase in the channel allocation step,  we first find the required number of observations of each channel by each AP to guarantee  condition\ (\ref{deltaC}) \cite{meghana,rosenski}. To do so,  the  probability of each AP not having a correct estimation of the channel means should be bounded. Let $\gamma=\gamma_{e,l}^M/2$. Define the following events:
	\begin{itemize}
		\item $A$: all players have an estimate satisfying condition\ (\ref{deltaC}),
		\item $B$: all players have $\geq Q$ ~observations of each channel $m$ for every $s$ in $[\beta]$,
		\item $A_k$: player $k$ has an estimate satisfying condition (\ref{deltaC}),
		\item $B_k$: player $k$ has $\geq Q$ ~observations of each channel $m$ for every $s$ in $[\beta]$.
	\end{itemize} 
	
	The following must hold:
	\begin{equation}
	\textrm{Pr} (\bar{A}_k~|~B_k )\leq \frac{\gamma}{K}.
	\end{equation}
	In fact,
	\begin{equation}
	\begin{aligned}
	&\textrm{Pr} (\bar{A}_k|B_k)\leq \textrm{Pr} ~ (\exists ~m,s, \textrm{s.t.} ~|\mu_M(k,m,s)-\hat{\mu}_M(k,m,s)|> \Delta_M~ |~ B_k) \overset{\text{(a)}}\leq\\
	&\sum_{m=1}^M\sum_{s=1}^{\beta}\textrm{Pr}~ (|\mu_M(k,m,s)-\hat{\mu}_M(k,m,s)|> \Delta_M~ |~ B_k)=\\
	&\sum_{m=1}^M\sum_{s=1}^{\beta}\sum_{q=Q}^\infty\textrm{Pr} ~(|\mu_M(k,m,s)-\hat{\mu}_M(k,m,s)|> \Delta_M~ | k ~\textrm{has}  ~q \textrm{ observations of $(m,s)$}) \times p_2 \overset{\text{(b)}}\leq\\
	&\sum_{m=1}^M\sum_{s=1}^{\beta}\sum_{q=Q}^\infty 2p_2e^{(-2q\Delta_M^2)} \leq\sum_{m=1}^M\sum_{s=1}^{\beta}2e^{(-2Q\Delta_M^2)}=2M\beta e^{(-2Q\Delta_M^2)},
	\end{aligned}
	\end{equation}
	where $(m,s)$ refers to channel $m$ with $s$ players on it, (a) results from applying the union bound and (b) from using Hoeffding's inequality \cite{hoeffding}, and $p_2=\textrm{Pr}~(q ~\textrm{observations of $(m,s)$} ~|~ q\geq Q)$.
	
	To ensure $\textrm{Pr} (\bar{A}_k|B_k)$ is lower than $\frac{\gamma}{K}$, $Q$ must satisfy:
	\begin{equation}\label{eq:numberofObs}
	Q\geq \frac{1}{2\Delta_M^2}\log(\frac{2KM\beta}{\gamma}).
	\end{equation}
	Then,
	\begin{equation}
	\textrm{Pr} ~(A|B)= 1-\textrm{Pr}(\bar{A}|B)\geq 1-\sum_{k=1}^K\textrm{Pr}~(\bar{A_k}|B_k)=
	1-\gamma,
	\end{equation}
	leading to all APs having an estimate of every channel satisfying condition (\ref{deltaC}) with probability higher than $1-\gamma$. 
	
	Next, we need to find a time horizon $T_h$ for the exploration phase of the channel allocation step large enough such that all players have $\geq Q$ observations of each arm with probability higher than $1-\gamma$. Note that the length of each  exploration phase $T_{\boldsymbol{\hat{\mu}}}$ does not necessarily satisfy $T_{\boldsymbol{\hat{\mu}}}\geq T_h$. In other words, all players can get  $\geq Q$ observations of each arm with probability higher than $1-\gamma$ after multiple exploration phases.
	
	Let $A_{k,m,s}(t)=1$ if player $k$ observed channel $m$ with $s$ APs on it at timeslot $t$, and 0 otherwise. 
	For $0<\tau<1$, we have:
	\begin{equation}\label{eq:obsWithTime}
	\begin{aligned}
	&\textrm{Pr $\left(\right.$ $k$ has $\leq (1-\tau)T_h \mathbb{E}[A_{k,m,s}]$ observations$\left.\right)$}=\textrm{Pr} \left( \sum_{t=1}^{T_h}A_{k,m,s}(t)\leq (1-\tau)T_h \mathbb{E}[A_{k,m,s}]  \right)=\\
	&\textrm{Pr} \left(e^{\left(-d \sum_{t=1}^{T_h}A_{k,m,s}(t)\right)}\geq e^{\left(-d(1-\tau)T_h \mathbb{E}[A_{k,m,s}]\right)}\right) \overset{\text{(a)}}\leq\frac{\mathbb{E}\left[e^{\left(-d \sum_{t=1}^{T_h}A_{k,m,s}(t)\right)}\right]}{e^{\left(-d(1-\tau)T_h \mathbb{E}[A_{k,m,s}]\right)}},
	\end{aligned}
	\end{equation}
	where $d>0$ and (a) results from applying the Chernoff bound. By noting that all players are randomly and uniformly sampling every channel during the exploration phase, for any $k\in \mathcal{K},m \in \mathcal{S},s \in [\beta]$, $A_{k,m,s}$ are i.i.d. across time. Hence:
	\begin{equation}\label{eq:mgfInd}
	\mathbb{E}\left[e^{\left(-d \sum_{t=1}^{T_h}A_{k,m,s}(t)\right)}\right]=\prod\limits_{t=1}^{T_h}\mathbb{E}\left[e^{\left(-d A_{k,m,s}(t)\right)}\right].
	\end{equation}  
	Moreover, $A_{k,m,s}(t)$ is a Bernoulli random variable that takes the value 1 with probability $p_A$. Therefore, we have:
	\begin{equation}
	\mathbb{E}\left[e^{\left(-d A_{k,m,s}(t)\right)}\right]=
	1+p_A(e^{-d}-1)\overset{\text{(a)}}\leq e^{\left(p_A\left(e^{-d}-1\right)\right)},
	\end{equation}
	where $(a)$ follows since $1+y\leq e^y$. Eq.\ (\ref{eq:mgfInd}) can hence be expressed as:
	\begin{equation}\label{eq:mgfT}
	\begin{aligned}
	\mathbb{E}\left[e^{\left(-d \sum_{t=1}^{T_h}A_{k,m,s}(t)\right)}\right]\leq& e^{\sum\limits_{t=1}^{T_h}\left(p_A\left(e^{-d}-1\right)\right)}=e^{\left(T_h \mathbb{E}[A_{k,m,s}]\left(e^{-d}-1\right)\right)}.
	\end{aligned}
	\end{equation}
	By inserting Eq.\ (\ref{eq:mgfT}) into Eq.\ (\ref{eq:obsWithTime}), we get:
	\begin{equation}\label{eq:boundInt}
	\begin{aligned}
	&\textrm{Pr $\left(\right.$player $k$ has $\leq (1-\tau)T_h \mathbb{E}[A_{k,m,s}]\left.\right)$}\leq e^{\left(T_h \mathbb{E}[A_{k,m,s}]\left(e^{-d}-1\right)\right)+\left(d(1-\tau)T_h \mathbb{E}[A_{k,m,s}]\right)}.
	\end{aligned}
	\end{equation}
	To make the bound as tight as possible, $d$ is chosen such that the right hand side of Eq.\ ({\ref{eq:boundInt}}) is minimized, leading to $d=-\log(1-\tau)$. By substituting $d$ by its value in Eq.\ (\ref{eq:boundInt}), we get:
	\begin{equation}\label{eq:obsWithTimeFinal}
	\begin{aligned}
	&\textrm{Pr $\left(\right.$player $k$ has $\leq (1-\tau)T_h \mathbb{E}[A_{k,m,s}]\left.\right)$}\leq
	e^{\left(-T_h \mathbb{E}[A_{k,m,s}]\left(\tau-(1-\tau)\log(1-\tau)\right)\right)}=\\
	&\left(\frac{e^{-\tau}}{(1-\tau)^{(1-\tau)}}\right)^{\left(T_h \mathbb{E}[A_{k,m,s}]\right)}\overset{\text{(a)}}\leq e^{-\frac{\tau^2}{2}T_h \mathbb{E}[A_{k,m,s}]},
	\end{aligned}
	\end{equation}
	where (a) results from having $(1-\tau)\log(1-\tau)>-\tau+\frac{\tau^2}{2}$, obtained by using a Taylor expansion.
	 
	Taking $\tau=1/2$ and  using a union bound on (\ref{eq:obsWithTimeFinal}), we get:
	\begin{equation}\label{eq:obsWithTimeAll}
	\begin{aligned}
	&\textrm{Pr ($\exists ~ k,m,s$ s.t. $k$ has $\leq \frac{T_h}{2} \mathbb{E}[A_{k,m,s}(t)]$ observations)}\leq KM\beta e^{\left(\frac{-\frac{1}{4}T_h\mathbb{E}[A_{k,m,s}]}{2}\right)},
	\end{aligned}
	\end{equation}
		which is upper bounded by $\gamma$ if $T_h$ satisfies:
	\begin{equation}\label{eq:Th}
	T_h\geq\frac{8}{\mathbb{E}[A_{k,m,s}]}\log\left(\frac{KM\beta}{\gamma}\right).
	\end{equation}
	
	Moreover, the number of observations of each arm during $T_h$ timeslots, $\sum_{t=1}^{T_h}A_{k,m,s}(t)$, must be at least equal to $Q$. Hence we need:
	\begin{equation}
	\begin{aligned}
	\sum_{t=1}^{T_h}A_{k,m,s}(t)>&\frac{T_h}{2}\mathbb{E}[A_{k,m,s}]\geq Q>\frac{1}{2\Delta_M^2}\log\left(\frac{2KM\beta}{\gamma}\right),
		\end{aligned}
	\end{equation}
	which holds if:
	\begin{equation}\label{eq:ThV2}
	\begin{aligned}
	T_h\geq\left\lceil\text{max}\left\{\frac{8}{\mathbb{E}[A_{k,m,s}]}\log\left(\frac{KM\beta}{\gamma}\right),\right.\right.\left.\left.\frac{1}{\Delta_M^2\mathbb{E}[A_{k,m,s}]}\log\left(\frac{2KM\beta}{\gamma}\right)\right\}\right\rceil.
	\end{aligned}
	\end{equation}	
	Note that:
	\begin{equation}\label{eq:expectationBound}
	\begin{aligned} &\mathbb{E}[A_{k,m,s}]=\binom{K-1}{s-1}\left(\frac{1}{M}\right)^s\left(1-\frac{1}{M}\right)^{K-s}\overset{\text{(a)}}\geq\left(\frac{1}{M}\right)^s\left(1-\frac{1}{M}\right)^{K-s}\geq\\
	&\left(\frac{1}{M}\right)\left(\frac{1}{M}\right)^{s-1}\left(1-\frac{1}{M}\right)^{K-1}\left(1-\frac{1}{M}\right)^{1-s}\overset{\text{(b)}}\geq\frac{1}{M e^{\left(\frac{K-1}{M-1}\right)}}\left(M-1\right)^{1-s}\overset{\text{(c)}}\geq\frac{\left(M-1\right)^{1-\beta}}{Me^{\left(\frac{K-1}{M-1}\right)}},
	\end{aligned}
	\end{equation}
	where (a) follows from having  $\binom{K-1}{s-1}\geq 1$, (b) from $(1-\frac{1}{x})^{x-1}\geq\frac{1}{e}$, and (c) from $s\leq \beta$.
	
	Hence, $T_h$ can be re-written as:
	\begin{equation}\label{eq:ThV3}
	\begin{aligned}
	T_h\geq\left\lceil\text{max}\left\{\frac{8 M e^{\left(\frac{K-1}{M-1}\right)}}{\left(M-1\right)^{1-\beta}}\log\left(\frac{KM\beta}{\gamma}\right),\right.\right. \left.\left.\frac{Me^{\left(\frac{K-1}{M-1}\right)}}{\Delta_M^2\left(M-1\right)^{1-\beta}}\log\left(\frac{2KM\beta}{\gamma}\right)\right\}\right\rceil.
	\end{aligned}
	\end{equation}
	Having $T_h$,  the probability of all APs having an estimate of the channel means satisfying  Eq.\ (\ref{deltaC}) is lower bounded by:
	\begin{equation}
	\begin{aligned}
	&\textrm{Pr} (A)= 1-\textrm{Pr} (\bar{A})=1-\left(\textrm{Pr} (\bar{A}|B)\,\textrm{Pr} (B)+\textrm{Pr} (\bar{A}|\bar{B})\,\textrm{Pr} (\bar{B})\right)\\
	&\geq 1-\left(\textrm{Pr}(\bar{A}|B))+\textrm{Pr}(\bar{B})\right)\geq 1-(\gamma+\gamma)=1-\gamma_{e,l}^M.
	\end{aligned}
	\end{equation}
	Since $\Delta_M=\frac{J_M^1-J_M^2}{2KN}\leq \frac{KN-0}{2KN}\leq \frac{1}{2}$, Eq.\  (\ref{eq:ThV3}) is satisfied if:
	\begin{equation}
	T_h=\frac{2Me^{\left(\frac{K-1}{M-1}\right)}}{\Delta_M^2\left(M-1\right)^{1-\beta}}\log\left(\frac{4KM\beta}{\gamma_{e,l}^M}\right).
	\end{equation}
	
	Having found the minimum needed length of the exploration epoch in the channel allocation phase, next, we upper bound the error probability in the $l^{\textrm{th}}$ exploration epoch. To do so, we first note that:
	\begin{equation}
	T_{\boldsymbol{\hat{\mu}_M}}\times l = T_h=\frac{2Me{ \left(\frac{K-1}{M-1}\right)}}{\Delta_M^2\left(M-1\right)^{1-\beta}}\log\left(\frac{4KM\beta}{\gamma_{e,l}^M}\right).
	\end{equation}
	To have $\gamma_{e,l}^M\leq 4KM\beta e^{-l}\leq 4 (M\beta)^2e^{-l}$, the length of each exploration epoch must satisfy: 
	\begin{equation}\label{eq:t_mu}
	T_{\boldsymbol{\hat{\mu}_M}}\geq \frac{2Me^{\left(\frac{K-1}{M-1}\right)}}{\Delta_M^2\left(M-1\right)^{1-\beta}}.
	\end{equation}
	\subsection{Lower Bound of the Length of the Exploration Phase in the Power Allocation Step}\label{appendix2:power}
	By following a similar analysis of the one in Appendix \ \ref{appendix2:channel}, the minimum length of the length of the exploration phase on each channel $m$ in the power allocation step can be given by:
	\begin{equation}\label{eq:Tp}
	T_P^0=
	\left\lceil{\frac{2Le^{\left(\frac{\beta-1}{L-1}\right)}}{\Delta_p^2}}\right\rceil.
	\end{equation}
	If the length of the exploration phase in the power allocation step on each channel $m$ satisfies Eq.\ (\ref{eq:Tp}), then  all players have an estimate of the power level means satisfying the condition in\ (\ref{deltaP}),  with probability $\geq 1-\gamma^P_{e,l}$, where $\gamma^P_{e,l}$ is upper bounded by $ 4\beta L e^{-l}$.
	
	\section{Proof of Lemma\ \ref{lemma3}}\label{appendix3}
	Let $p$ be the true probability of player $k$ not being the sole occupier of some channel $m$ when $k$ accesses the $M$ channels uniformly at random:
	\begin{equation}\label{eq:probNoCollision}
	p=1-\sum\limits_{m=1}^M \frac{1}{M}\left(1-\frac{1}{M}\right)^{K-1}=1-\left(1-\frac{1}{M}\right)^{K-1}.
	\end{equation}
	From Eq. (\ref{eq:probNoCollision}), the number of APs $K$ is given by:
	\begin{equation}
	K=\textrm{round}\left(\frac{\log(1-p)}{\log(1-\frac{1}{M})}+1\right).
	\end{equation}
	The estimated probability of player $k$ not accessing channel $m$ alone at time $t$ is: $
	\hat{p}_t={b_k^t}/{t}.$ 
	For a correct estimation of the number of APs, we need to find a time $t$ sufficiently large to guarantee with high probability that:
	\begin{equation}\label{eq:k=k_hat}
	\begin{aligned}
	\hat{K}=&\textrm{round}\left(\frac{\log(1-\hat{p}_t)}{\log(1-\frac{1}{M})}+1\right)=\textrm{round}\left(\frac{\log(1-p)}{\log(1-\frac{1}{M})}+1\right)=K.
	\end{aligned}	\end{equation}
	To ensure Eq.\ (\ref{eq:k=k_hat}), if $\kappa < 1/2$, the following must hold:
	\begin{equation}\label{eq:p_pt}
	\left|\frac{\log(\frac{t-b_k^t}{t})}{\log(1-\frac{1}{M})}-\frac{\log(1-p)}{\log(1-\frac{1}{M})}\right|=\left|\frac{\log\left(\frac{1-\hat{p}_t}{1-p}\right)}{\log\left(1-\frac{1}{M}\right)}\right|\leq \kappa.
	\end{equation}
	Let $\hat{p}_t-p=\xi$. After some calculations, Eq.\ (\ref{eq:p_pt}) can be expressed as:
	\begin{equation}
	\begin{aligned}
	(1-p)\left(1-\left(1-\frac{1}{M}\right)^{-\kappa}\right)\leq\xi&\leq
	 (1-p)\left(1-\left(1-\frac{1}{M}\right)^\kappa\right)&.
	\end{aligned}
	\end{equation}
	With high probability, $K=\hat{K}$ when $\kappa<\frac{1}{2}$, if $|\hat{p}_t-p|\leq \xi_1$, where:
	\begin{equation}\label{eq:xi1}
	\begin{aligned}
	\xi_1= \min\left\{\left|(1-p)\left(1-\left(1-\frac{1}{M}\right)^{-\kappa}\right)\right|,\right. 
	\left.\left|(1-p)\left(1-\left(1-\frac{1}{M}\right)^\kappa\right)\right|\right\}.
	\end{aligned}
	\end{equation}
	Let $T_{\hat{K}}$ be a large enough time horizon for which the estimated probability $\hat{p}_{T_{\hat{K}}}$ is an average of i.i.d. random variables with expectation $p$. Using Hoeffding's inequality \cite{hoeffding}, we get:
	\begin{equation}
	\textrm{Pr}\left(|\hat{p}_{T_{\hat{K}}}-p|\geq \xi_1\right)\leq 2 e^{-2T_{\hat{K}}\xi_1^2}.
	\end{equation}
	To bound the probability of an incorrect estimation of $\hat{K}$ by some small value $\eta$,  $T_{\hat{K}}$ must be lower bounded by:
	\begin{equation}
	T_{\hat{K}}\geq\frac{\log(2\eta)}{2\xi_1^2}.
	\end{equation} 
	To get a simpler expression of $\xi_1$ and hence of $ T_{\hat{K}}$, suppose that $\kappa=0.49$. With the expression of $p$ given by Eq.\ (\ref{eq:probNoCollision}), the first term in Eq. (\ref{eq:xi1}) can be lower bounded as:
	\begin{equation}\label{eq:lbTerm1}
	\begin{aligned}
	&\left|\left(1-\frac{1}{M}\right)^{K-1}\left(1-\left(1-\frac{1}{M}\right)^{-0.49}\right)\right|=-\left(1-\frac{1}{M}\right)^{K-1}\left(1-\left(1-\frac{1}{M}\right)^{-0.49}\right)\overset{\text{(a)}}\geq\\
	&\left(1-\frac{1}{M}\right)^{M\beta-1}\left(1-\left(-1+\frac{1}{M}\right)^{-0.49}\right)\overset{\text{(b)}}\geq\frac{1}{e^{\left(\frac{M\beta-1}{M-1}\right)}}\left(1-\left(-1+\frac{1}{M}\right)^{-0.49}\right)\overset{\text{(c)}}\geq\frac{0.49}{Me^{\left(\frac{M\beta-1}{M-1}\right)}},
	\end{aligned}
	\end{equation}
	where (a) results from having $M \beta \geq K$, (b) from $(1-\frac{1}{x})^{x-1}\geq \frac{1}{e}$, and (c) from using a Taylor Expansion.
	Similarly, the second term in Eq. (\ref{eq:xi1}) can be lower bounded as:
	\begin{equation}\label{eq:lbTerm2}
	\begin{aligned}
	\left|\left(1-\frac{1}{M}\right)^{K-1}\left(1-\left(1-\frac{1}{M}\right)^{0.49}\right)\right|\geq
	\frac{0.49}{Me^{\left(\frac{M\beta-1}{M-1}\right)}}.
	\end{aligned}
	\end{equation}
	Variable $\xi_1$ is therefore lower bounded by: $
	\xi_1\geq\frac{0.49}{Me^{\left(\frac{M\beta-1}{M-1}\right)}}.$
	Hence, $\hat{K}=K$ with probability higher than $1-\eta$ if:
	\begin{equation}\label{eq:t_k}
	T_{\hat{K}}=
	\left\lceil2.08\log{\left(\frac{2}{\eta}\right)}M^2e^{2\left(\frac{M\beta-1}{M-1}\right)} \right\rceil.
	\end{equation}
\end{appendices}
\vspace{-1cm}
\bibliographystyle{IEEEtran}

\bibliography{IEEEabrv,MABRevised_Arxiv}

\begin{thebibliography}{10}
\providecommand{\url}[1]{#1}
\csname url@samestyle\endcsname
\providecommand{\newblock}{\relax}
\providecommand{\bibinfo}[2]{#2}
\providecommand{\BIBentrySTDinterwordspacing}{\spaceskip=0pt\relax}
\providecommand{\BIBentryALTinterwordstretchfactor}{4}
\providecommand{\BIBentryALTinterwordspacing}{\spaceskip=\fontdimen2\font plus
\BIBentryALTinterwordstretchfactor\fontdimen3\font minus
  \fontdimen4\font\relax}
\providecommand{\BIBforeignlanguage}[2]{{%
\expandafter\ifx\csname l@#1\endcsname\relax
\typeout{** WARNING: IEEEtran.bst: No hyphenation pattern has been}%
\typeout{** loaded for the language `#1'. Using the pattern for}%
\typeout{** the default language instead.}%
\else
\language=\csname l@#1\endcsname
\fi
#2}}
\providecommand{\BIBdecl}{\relax}
\BIBdecl

\bibitem{7397856}
M.~R. {Palattella}, M.~{Dohler}, A.~{Grieco}, G.~{Rizzo}, J.~{Torsner},
  T.~{Engel}, and L.~{Ladid}, ``Internet of things in the {5G} era: Enablers,
  architecture, and business models,'' \emph{{IEEE} J. Sel. Areas Commun.},
  vol.~34, no.~3, pp. 510--527, March 2016.

\bibitem{6525592}
H.~{Elsawy}, E.~{Hossain}, and D.~I. {Kim}, ``{HetNets} with cognitive small
  cells: user offloading and distributed channel access techniques,''
  \emph{IEEE Commun. Mag.}, vol.~51, no.~6, pp. 28--36, June 2013.

\bibitem{sonReference}
S.~Sesia, I.~Toufik, and M.~Baker, \emph{LTE: The UMTS Long Term Evolution,
  From Theory to Practice}.\hskip 1em plus 0.5em minus 0.4em\relax New York,
  USA: Wiley, 2009.

\bibitem{6215539}
R.~{Razavi} and H.~{Claussen}, ``Urban small cell deployments: Impact on the
  network energy consumption,'' in \emph{Proc.\ IEEE Wireless Commun. and
  Networking Conf. (WCNC)}, Paris, France, Apr. 2012, pp. 47--52.

\bibitem{JA_VehTech}
J.~Farah, A.~Kilzi, C.~{Abdel Nour}, and C.~Douillard, ``{Power Minimization in
  Distributed Antenna Systems Using Non-Orthogonal Multiple Access and Mutual
  Successive Interference Cancellation},'' \emph{IEEE Trans. on Veh. Technol.},
  vol.~67, no.~12, pp. 11\,873--11\,885, Dec. 2018.

\bibitem{5441362}
M.~{Rahman} and H.~{Yanikomeroglu}, ``Enhancing cell-edge performance: a
  downlink dynamic interference avoidance scheme with inter-cell
  coordination,'' \emph{{IEEE} Trans. Wireless Commun.}, vol.~9, no.~4, pp.
  1414--1425, Apr. 2010.

\bibitem{6945909}
A.~{Bin Sediq}, R.~{Schoenen}, H.~{Yanikomeroglu}, and G.~{Senarath},
  ``Optimized distributed inter-cell interference coordination {(ICIC)} scheme
  using projected subgradient and network flow optimization,'' \emph{{IEEE}
  Trans. Commun.}, vol.~63, no.~1, pp. 107--124, Jan. 2015.

\bibitem{7070660}
J.~{Yun} and K.~G. {Shin}, ``Distributed coordination of co-channel femtocells
  via inter-cell signaling with arbitrary delay,'' \emph{{IEEE} J. Sel. Areas
  Commun.}, vol.~33, no.~6, pp. 1127--1139, 2015.

\bibitem{7343514}
M.~{Yassin}, Y.~{Dirani}, M.~{Ibrahim}, S.~{Lahoud}, D.~{Mezher}, and
  B.~{Cousin}, ``A novel dynamic inter-cell interference coordination technique
  for {LTE} networks,'' in \emph{Proc. IEEE Annual Int. Symp. on Personal,
  Indoor, and Mobile Radio Commun. (PIMRC)}, Hong Kong, China, Sept. 2015, pp.
  1380--1385.

\bibitem{7476897}
O.~{Iacoboaiea}, B.~{Sayrac}, S.~{Ben Jemaa}, and P.~{Bianchi}, ``{SON
  Coordination in Heterogeneous Networks: A Reinforcement Learning
  Framework},'' \emph{IEEE Trans. Wireless Commun.}, vol.~15, no.~9, pp.
  5835--5847, Sept. 2016.

\bibitem{lattimore}
T.~Lattimore and C.~Szepesvári, \emph{Bandit Algorithms}.\hskip 1em plus 0.5em
  minus 0.4em\relax Cambridge: Cambridge Univ. Press, 2020.

\bibitem{6140047}
A.~{Feki} and V.~{Capdevielle}, ``Autonomous resource allocation for dense
  {LTE} networks: A multi armed bandit formulation,'' in \emph{Proc. IEEE
  Annual Int. Symp. on Personal, Indoor, and Mobile Radio Commun. (PIMRC)},
  Toronto, ON, Canada, Sept. 2011, pp. 66--70.

\bibitem{6240019}
A.~{Feki}, V.~{Capdevielle}, and E.~{Sorsy}, ``Self-organized resource
  allocation for {LTE} pico cells: A reinforcement learning approach,'' in
  \emph{Proc. IEEE Veh. Techn. Conf. Spring (VTC)}, Yokohama, Japan, May 2012,
  pp. 1--5.

\bibitem{7248837}
P.~{Coucheney}, K.~{Khawam}, and J.~{Cohen}, ``Multi-armed bandit for
  distributed inter-cell interference coordination,'' in \emph{Proc.\ Int.\
  Conf.\ on Communications (ICC)}, London, UK, June 2015, pp. 3323--3328.

\bibitem{5738217}
A.~{Anandkumar}, N.~{Michael}, A.~K. {Tang}, and A.~{Swami}, ``Distributed
  algorithms for learning and cognitive medium access with logarithmic
  regret,'' \emph{{IEEE} J. Sel. Areas Commun.}, vol.~29, no.~4, pp. 731--745,
  Apr. 2011.

\bibitem{5535151}
K.~{Liu} and Q.~{Zhao}, ``Distributed learning in multi-armed bandit with
  multiple players,'' \emph{{IEEE} Trans. Signal Process.}, vol.~58, no.~11,
  pp. 5667--5681, Nov. 2010.

\bibitem{magesh}
\BIBentryALTinterwordspacing
A.~Magesh and V.~V. Veeravalli, ``Multi-player multi-armed bandits with
  non-zero rewards on collisions for uncoordinated spectrum access,'' 2019.
  [Online]. Available: \url{arXiv:1910.09089}
\BIBentrySTDinterwordspacing

\bibitem{meghana}
\BIBentryALTinterwordspacing
M.~Bande and V.~V. Veeravalli, ``Multi-user multi-armed bandits for
  uncoordinated spectrum access,'' 2018. [Online]. Available:
  \url{arxiv:1807.00867}
\BIBentrySTDinterwordspacing

\bibitem{9075198}
S.~{Ali}, A.~{Ferdowsi}, W.~{Saad}, N.~{Rajatheva}, and J.~{Haapola},
  ``Sleeping multi-armed bandit learning for fast uplink grant allocation in
  machine type communications,'' \emph{IEEE Trans. Commun., Early Access}, Apr.
  2020.

\bibitem{8669870}
Y.~{Lin}, T.~{Wang}, and S.~{Wang}, ``{UAV-Assisted Emergency Communications:
  An Extended Multi-Armed Bandit Perspective},'' \emph{{IEEE} Commun. Lett.},
  vol.~23, no.~5, pp. 938--941, March 2019.

\bibitem{7000863}
M.~N. {Esfahani} and B.~S. {Ghahfarokhi}, ``Improving spectrum efficiency in
  fractional allocation of radio resources to self-organized femtocells using
  learning automata,'' in \emph{Int. Symp. on Telecommun.}, Tehran, Iran, Sept.
  2014, pp. 1071--1076.

\bibitem{8336932}
J.~A. {Ayala-Romero}, J.~J. {Alcaraz}, and J.~{Vales-Alonso}, ``{Data-Driven
  Configuration of Interference Coordination Parameters in HetNets},''
  \emph{IEEE Trans. Veh. Technol.}, vol.~67, no.~6, pp. 5174--5187, Apr. 2018.

\bibitem{8676344}
Z.~{Tian}, J.~{Wang}, J.~{Wang}, and J.~{Song}, ``Distributed {NOMA}-based
  multi-armed bandit approach for channel access in cognitive radio networks,''
  \emph{{IEEE} Wireless Commun. Lett.}, vol.~8, no.~4, pp. 1112--1115, 2019.

\bibitem{8902878}
M.~A. {Adjif}, O.~{Habachi}, and J.~{Cances}, ``Joint channel selection and
  power control for {NOMA}: A multi-armed bandit approach,'' in \emph{Proc.\
  IEEE Wireless Commun. and Networking Conf. (WCNC)}, 2019, pp. 1--6.

\bibitem{6953296}
S.~{Maghsudi} and S.~{Stańczak}, ``Joint channel selection and power control
  in infrastructureless wireless networks: A multiplayer multiarmed bandit
  framework,'' \emph{{IEEE} Trans. Veh. Technol.}, vol.~64, no.~10, pp.
  4565--4578, Oct. 2015.

\bibitem{tse}
D.~Tse and P.~Viswanath, \emph{Fundamentals of Wireless Communication}.\hskip
  1em plus 0.5em minus 0.4em\relax Cambridge: Cambridge University Press, 2005.

\bibitem{6666209}
Y.~Saito, A.~Benjebbour, Y.~Kishiyama, and T.~Nakamura, ``System-level
  performance evaluation of downlink non-orthogonal multiple access {(NOMA)},''
  in \emph{Proc. IEEE Annual Int. Symp. on Personal, Indoor, and Mobile Radio
  Commun. (PIMRC)}, Sept 2013, pp. 611--615.

\bibitem{8869799}
M.~J. {Youssef}, J.~{Farah}, C.~A. {Nour}, and C.~{Douillard}, ``Resource
  allocation in {NOMA} systems for centralized and distributed antennas with
  mixed traffic using matching theory,'' \emph{{IEEE} Trans. Commun.}, vol.~68,
  no.~1, pp. 414--428, Jan. 2020.

\bibitem{tvt_mjy}
------, ``Full-duplex and backhaul-constrained {UAV}-enabled networks using
  {NOMA},'' \emph{IEEE Trans. Veh. Technol., Early Access}, June 2020.

\bibitem{8891385}
M.~J. {Youssef}, C.~A. {Nour}, J.~{Farah}, and C.~{Douillard},
  ``Backhaul-constrained resource allocation and {3D} placement for
  {UAV}-enabled networks,'' in \emph{Proc. IEEE Veh. Techn. Conf. Fall (VTC)},
  Honolulu, Hi, USA, Sept. 2019, pp. 1--7.

\bibitem{got}
I.~Bistritz and A.~Leshem, ``Distributed multi-player bandits - a game of
  thrones approach,'' in \emph{32nd Proc. Int. Conf. on Neural Inf. Process.
  Syst.}, ser. NIPS'18, Montreal, Canada, 2018, pp. 7222--7232.

\bibitem{8085106}
J.~{Choi}, ``{NOMA-Based Random Access With Multichannel ALOHA},'' \emph{{IEEE}
  J. Sel. Areas Commun.}, vol.~35, no.~12, pp. 2736--2743, Dec. 2017.

\bibitem{7982784}
J.~{Zhu}, J.~{Wang}, Y.~{Huang}, S.~{He}, X.~{You}, and L.~{Yang}, ``On optimal
  power allocation for downlink non-orthogonal multiple access systems,''
  \emph{{IEEE} J. Sel. Areas Commun.}, vol.~35, no.~12, pp. 2744--2757, Dec.
  2017.

\bibitem{pareto}
J.~R. Marden, H.~P. Young, and L.~Y. Pao, ``Achieving {Pareto} optimality
  through distributed learning,'' in \emph{SIAM J. Control Optim.}, no.~5,
  2014, pp. 2753--2770.

\bibitem{evolution}
\BIBentryALTinterwordspacing
H.~P. Young, ``The evolution of conventions,'' \emph{Econometrica}, vol.~61,
  no.~1, pp. 57--84, 1993. [Online]. Available:
  \url{http://www.jstor.org/stable/2951778}
\BIBentrySTDinterwordspacing

\bibitem{chernoff}
K.-M. Chung, H.~Lam, Z.~Liu, and M.~Mitzenmacher, ``{Chernoff-Hoeffding} bounds
  for {Markov chains}: Generalized and simplified,'' \emph{29th Symp. Theor.
  Aspects of Comput. Sci.}, pp. 124--135, Feb. 2012.

\bibitem{3gpp}
{3GPP}, ``{TR25-814 (V7.1.0), Physical Layer Aspects for Evolved Universal
  Terrestrial Radio Access (UTRA)},'' 2006.

\bibitem{etsi}
``{5G NR Physical channels and modulation},'' \emph{{3GPP TS 38.211 version
  15.3.0 Release 15}}, Oct. 2018.

\bibitem{rosenski}
J.~Rosenski, O.~Shamir, and L.~Szlak, ``Multi-player bandits-a musical chairs
  approach,'' in \emph{Int. Conf. on Mach. Learn.}, 2016, pp. 155--163.

\bibitem{hoeffding}
W.~Hoeffding, ``Probability inequalities for sums of bounded random
  variables,'' \emph{J. Amer. Statist. Assoc.}, vol.~58, no. 301, pp. 13--30,
  1963.

\end{thebibliography}

\end{document}